\theoremstyle{plain}
\newtheorem{theorem}{Theorem}
\newtheorem{lemma}{Lemma}
\newtheorem{proposition}{Proposition}
\theoremstyle{definition}
\theoremstyle{remark}
\title{ElasticMM: Efficient Multimodal LLMs Serving with Elastic Multimodal Parallelism}
\author[1,2]{\textbf{Zedong Liu}}
\author[3]{\textbf{Shenggan Cheng}}
\author[1]{\textbf{Guangming Tan}}
\author[3]{\textbf{\dag Yang You}} %
\author[1]{\textbf{\dag Dingwen Tao}} %
\affil[1]{Institute of Computing Technology, Chinese Academy of Sciences}
\affil[2]{University of Electronic Science and Technology of China}
\affil[3]{National University of Singapore}
\begin{document}
\footnotetext{$\dag$ Corresponding Authors.}

\maketitle

\begin{abstract}
Multimodal large language models (MLLMs) extend LLMs to handle images, videos, and audio by incorporating feature extractors and projection modules. However, these additional components—combined with complex inference pipelines and heterogeneous workloads—introduce significant inference overhead. Therefore, efficiently serving MLLMs remains a major challenge. Current tightly coupled serving architectures struggle to distinguish between mixed request types or adapt parallelism strategies to different inference stages, leading to increased time-to-first-token (TTFT) and poor resource utilization. To address this, we introduce Elastic Multimodal Parallelism (EMP), a new serving paradigm that elastically adapts to resource heterogeneity across request types and inference stages. Building upon EMP, we develop ElasticMM, an MLLM serving system that (1) separates requests into independent modality groups with dynamic resource allocation via a modality-aware load balancer; (2) decouples inference stages and enables parallelism adjustment and adaptive scaling via elastic partition scheduling; and (3) improves inference efficiency through unified multimodal prefix caching and non-blocking encoding. Experiments on diverse real-world datasets show that ElasticMM outperforms state-of-the-art (SOTA) serving systems, reducing TTFT by up to 4.2$\times$ and achieving 3.2–4.5$\times$ higher throughput while meeting service-level objectives (SLOs).
\end{abstract}

\section{Introduction}
\label{sec1}
Large Language Models (LLMs) have had a profound impact on real-world applications due to their exceptional performance~\citep{chen2022visualgpt,wu2023multimodal,liang2024survey}. Their gradual expansion into the multimodal domain has led to the emergence of Multimodal Large Language Models (MLLMs), which can process inputs such as images, videos, and audio~\citep{hu2023promptcap,mokady2021clipcap}. By employing feature extractors and projection modules to map multimodal inputs into the LLM feature space, MLLMs excel at tasks like visual question answering (VQA), image captioning, and multimodal interaction~\citep{schwenk2022okvqa,chen2024internvl,li2024llava,team2024gemini}. Consequently, MLLMs are increasingly deployed in online services, where they are highly sensitive to service-level objectives (SLOs).

Unlike traditional LLMs that process only text~\citep{vaswani2017attention,touvron2023llama,floridi2020gpt,achiam2023gpt,bai2023qwen}, MLLMs must handle diverse input types, introducing additional model components and a more complex inference pipeline. This pipeline typically includes distinct stages such as image preprocessing, image encoding, and language model inference (comprising prefill and decode). These added stages increase computational complexity (Fig.~\ref{fig1}b) and significantly extend the average context length (Fig.~\ref{fig1}c). However, existing inference serving systems adopt a tightly coupled architecture that neither distinguishes between request types nor decouples inference components, executing all stages on the same hardware instances~\citep{wolf2020transformers,kwon2023efficient,aminabadi2022deepspeed,zheng2024sglang}. As a result, time-to-first-token (TTFT) \textit{increases sharply} under heavy multimodal workloads. Furthermore, encoder-decoder models introduce cross-attention layers~\citep{alayrac2022flamingo,dai2024nvlm,chi2024llama}, and the computational heterogeneity between text-only and multimodal requests \textit{reduce batching efficiency}.

\begin{figure}
    \centering
    \vspace{-2mm}
    \begin{subfigure}{\linewidth}
        \includegraphics[width=\linewidth]{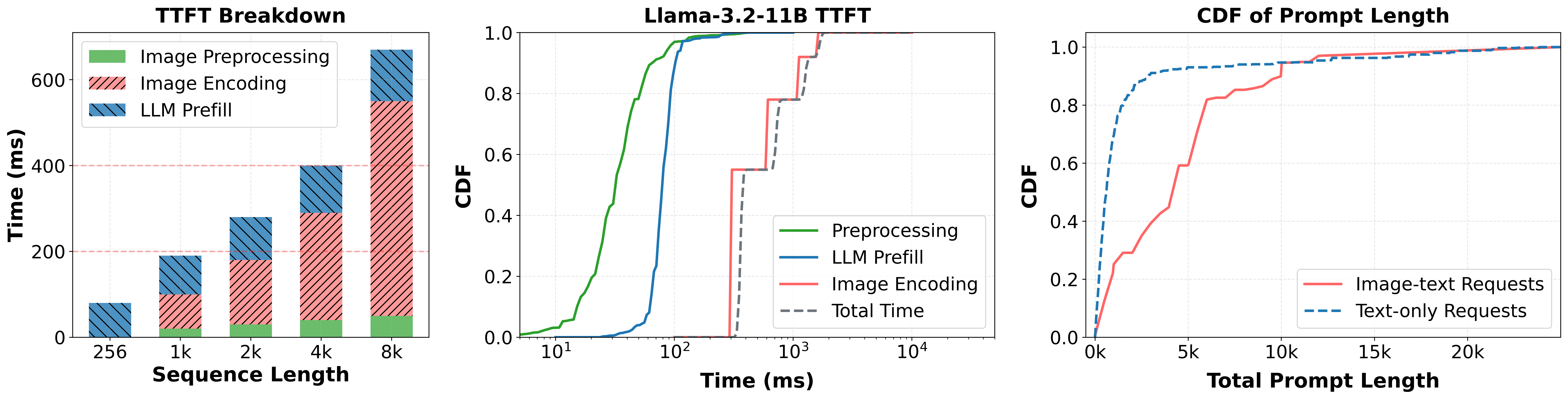}
        \caption*{} 
        \label{fig:combined}
    \end{subfigure}
    \vspace{-2mm}
    {\small
    \vspace{-5mm}
    \begin{tabular}{
        >{\centering\arraybackslash}m{0.25\linewidth}
        >{\centering\arraybackslash}m{0.38\linewidth}
        >{\centering\arraybackslash}m{0.3\linewidth}
    }
        \textbf{(a)}~TTFT breakdown\phantomsection\label{fig1a} &
        \textbf{(b)}~CDF of inference stage latencies\phantomsection\label{fig1b} &
        \textbf{(c)}~CDF of prompt lengths\phantomsection\label{fig1c} \\
    \end{tabular}
    }
    \vspace{-4mm}
    \caption{MLLMs' inference overhead and workload complexity. (a) and (b) demonstrate the significant overhead introduced by MLLMs. (c) reveals the longer context in multimodal requests. Results obtained using the LLaMA3.2-11B model on the ShareGPT-4o dataset.}
    \label{fig1}
    \vspace{-5mm}
\end{figure}

To address this issue, a decoupled multimodal inference architecture emerges as a promising solution—processing text-only and multimodal requests on separate instances. Our systematic analysis reveals that multimodal workloads exhibit bursty patterns, often characterized by sudden spikes in image inputs (as also observed in previous work~\citep{qiumodserve}), and that different inference stages benefit from varying degrees of parallelism. For example, image encoding and LLM prefill stages have higher computational complexity and benefit from large-scale parallelism, whereas the decode stage has limited scalability~\citep{zhong2024distserve,patel2024splitwise,agrawal2023sarathi}. These observations motivate two critical design challenges: (1) static resource allocation struggles to adapt to dynamically changing workloads, and (2) fixed parallelism strategies often fail to match the varying resource demands of different inference stages.

To this end, we propose \textbf{Elastic Multimodal Parallelism (EMP)}. To address the first challenge, we introduce a \textit{modality-aware load balancing technique} that monitors real-time workload fluctuations and dynamically adjusts resource allocation across different modality groups. To tackle the second challenge, we design an \textit{elastic partitioned scheduling technique}, which enables dynamic parallelism adaptation at the granularity of each inference stage. For example, compute-intensive stages such as encoding and prefill are scaled across more GPUs, while the decode stage reduces parallelism to free up resources for other requests.

Building on EMP, we present \textbf{ElasticMM}, an MLLM serving system with four key contributions:
\begin{itemize}[leftmargin=1.3em,topsep=0pt]
\item We identify tight coupling in existing systems as a major bottleneck for MLLM serving and propose EMP, a new serving paradigm that separates text-only and multimodal requests into independent modality groups and decouples inference stages for stage-specific parallelism.
\item We design two core techniques for EMP: (1) \textbf{\textit{modality-aware load balancing}}, which dynamically allocates and scales resources across modalities to handle unpredictable workloads; and (2) \textbf{\textit{elastic partitioned scheduling}}, which controls parallelism through flexible scheduling and scaling to maximize inference performance.
\item We propose two optimizations to mitigate MLLM-specific overheads: \textbf{\textit{unified multimodal prefix caching}} to reduce redundant computation and data transfer, and \textbf{\textit{non-blocking encoding}} to minimize the impact of encoding latency on the overall inference pipeline.
\item We conduct a comprehensive evaluation of ElasticMM on two real-world datasets. Compared to the SOTA baseline vLLM~\citep{kwon2023efficient}, ElasticMM reduces TTFT by up to 4.2$\times$ and achieves a 3.2$\times$–4.5$\times$ throughput improvement while meeting service-level objective (SLO) requirements.
\end{itemize}

\section{Background and Motivation}

\subsection{Multimodal Large Language Models Inference}

\textbf{MLLM Inference Pipeline.} MLLMs extend traditional language models by integrating visual, auditory, and other modalities, enabling unified reasoning across heterogeneous input sources for complex multimodal tasks~\citep{chen2022visualgpt,wu2023multimodal,liang2024survey,mokady2021clipcap}. Taking vision-language models as an example, the MLLM inference pipeline typically consists of three key stages:
(1) Image Preprocessing: Raw images are resized and divided into uniformly sized tiles~\citep{radford2021learning}.
(2) Image Encoding: A vision encoder extracts visual features and converts them into vision tokens~\citep{dosovitskiy2020image,zhai2023sigmoid}.
(3) Text Generation: A language model takes both vision tokens and a text prompt as input to generate responses.

\textbf{MLLM Architectures.} Modern MLLMs generally fall into two architectural categories:
(1) Decoder-only (DecOnly) architectures, such as LLaVA-OneVision~\citep{li2024llava}, Qwen-VL~\citep{wang2024qwen2}, DeepSeek’s Janus~\citep{chen2025janus}, and InternVL~\citep{chen2024internvl}; and
(2) Encoder-decoder (EncDec) architectures, including LLaMA-3.2 Vision~\citep{chi2024llama}, NVLM-X~\citep{dai2024nvlm}, and Flamingo~\citep{alayrac2022flamingo}.
The key difference lies in how vision and text tokens are processed. Decoder-only models concatenate vision and text tokens and feed them together into the language model; vision tokens participate in every generation step. In contrast, encoder-decoder models use cross-attention modules to align multimodal inputs: vision tokens interact with text tokens only via these cross-attention layers, which are inserted between self-attention layers.

\begin{table}[t]
    \centering
    \caption{Model configurations for four representative MLLMs with input image of 904$\times$904 pixels.}
    \label{tab1}
    \footnotesize
    \begin{tabularx}{\linewidth}{l >{\centering\arraybackslash}X >{\centering\arraybackslash}X >{\centering\arraybackslash}X >{\centering\arraybackslash}X}
    \toprule
    \textbf{MLLM Model Name} & \textbf{Architecture}  & \textbf{Image Encoder} & \textbf{Total Image} & \textbf{LLM Backend} \\
     & &  \textbf{\textit{(\#Params)}} &\textbf{Token Size}& \textbf{\textit{(\#Params)}} \\
    \midrule
    Llama3.2-Vision 11B & Encoder-Decoder  & ViT-H/14 (630M) & $6516$ & Llama 3.1 (8B) \\
    Llama3.2-Vision 90B & Encoder-Decoder  & ViT-H/14 (630M) & $6516$ & Llama 3.1 (70B) \\
    Qwen2.5-VL 7B & Decoder-only  & ViT (670M) & $7410$ & Qwen2.5 (7B) \\
    Qwen2.5-VL 72B & Decoder-only & ViT (670M) & $7410$ & Qwen2.5 (72B) \\
    \bottomrule
    \end{tabularx}
    \vspace{-5mm}
\end{table}

\textbf{Additional Overheads.} With the continuous increase in model scale and complexity, the computational cost of inference also grows, especially for MLLMs. Compared to language models, MLLMs introduce overhead from two main sources (shown in Fig.~\ref{fig1}a): (1) Increased Architectural Complexity: extra components such as vision encoders and cross-attention layers make the model heavier. (2) Extended Context Length: Multimodal data, once encoded into tokens, are concatenated with text prompts, increasing input length during inference~\citep{radford2021learning}. Specifically, this extension increases computational load, memory usage, and causes degradation in latency and throughput. Table~\ref{tab1} illustrates the encoder sizes and vision token lengths added by four mainstream open-source MLLMs.

\subsection{LLM Serving Systems}
\textbf{Existing LLM Serving.}
Existing serving systems~\citep{rasley2020deepspeed, aminabadi2022deepspeed, 2023lmdeploy}, such as vLLM~\citep{kwon2023efficient} and SGLang~\citep{zheng2024sglang}, have introduced a range of techniques to accelerate inference. To avoid redundant computation, these systems cache the key-value (KV) states of tokens for reuse. This design splits the inference process into two stages: the prefill stage and the decode stage. The prefill stage computes the KV cache for all input tokens and generates the first output token, while the decode stage generates one token per iteration. Consequently, the prefill stage carries a significantly heavier computational load. To this end, ORCA~\citep{yu2022orca} introduced continuous batching, enabling the system to process the prefill and decode stages of multiple requests in an uninterrupted stream of batches. To alleviate the burden of long input contexts, chunked prefill~\citep{agrawal2023sarathi, holmes2024deepspeed} splits long contexts into smaller segments that can be interleaved with decoding. However, interference between the two stages remains difficult to eliminate. A promising solution is prefill-decoding disaggregation, which places the two stages on separate GPU instances~\citep{zhong2024distserve, patel2024splitwise, jin2024p}. LoongServe~\citep{wu2024loongserve} further introduces elastic sequence parallelism to further optimize parallelism and resource allocation under this disaggregated architecture.

\textbf{Coupled Multimodal Serving.}
Despite these advancements, SOTA serving systems exhibit significant coupling issues when deployed for MLLMs. This coupling occurs at both the service and infrastructure levels. At the service level, existing systems treat unimodal (text-only) and multimodal requests identically, routing both through the same inference pipeline, despite their vastly different computational requirements. At the infrastructure level, all components—including preprocessors, vision encoders, and LLM backends—are colocated on the same hardware server. These tightly coupled components must share compute and memory resources while being constrained to a uniform batching and model parallelism strategy, leading to resource contention and inefficiency.

\subsection{Research Challenges and Motivations}
\label{sec:problem}

Existing tightly coupled systems face significant limitations when serving multimodal models.
\textit{\textbf{Service-level Problem:}} Multimodal and text-only requests differ substantially in resource demands due to longer context lengths and the injection of multimodal data (as illustrated in Fig. \ref{fig1}c). Meeting the same SLO requires different resource allocations for each request type. However, coupling their inference execution reduces overall resource utilization and increases the risk of SLO violations. It also prevents leveraging their distinct characteristics to enable more efficient inference and resource scheduling. \textit{\textbf{Architectural Problem:}} Encoder-decoder models\citep{alayrac2022flamingo,dai2024nvlm,chi2024llama}, which incorporate cross-attention mechanisms, are ill-suited for mixed-batch processing. Combining both request types in a single batch not only increases the latency of text-only requests but also degrades overall throughput.


These lead to our \textit{\textbf{Key Insight 1:}} To better meet the distinct requirements of each request type, a \textit{modality-aware decoupled} inference architecture should be adopted, where text-only and multimodal requests are processed independently.

However, naive static decoupling with fixed resource allocation is ineffective under dynamic workloads. It cannot adapt to changing request distributions (e.g., sudden spikes in image traffic) or adjust parallelism strategies based on evolving resource demands. 

This leads to our \textit{\textbf{Key Insight 2:}} An \textit{elastic} serving system is essential for handling dynamic multimodal workloads. Such a system must extend static architectures to support dynamic resource reallocation and stage-specific parallelism adjustments. By enabling scalable execution across elastic instances, it effectively alleviates compute and memory bottlenecks in multimodal serving.

\section{Elastic Multimodal Parallelism}

\begin{figure*}
   \centering
   \includegraphics[width=0.85\linewidth]{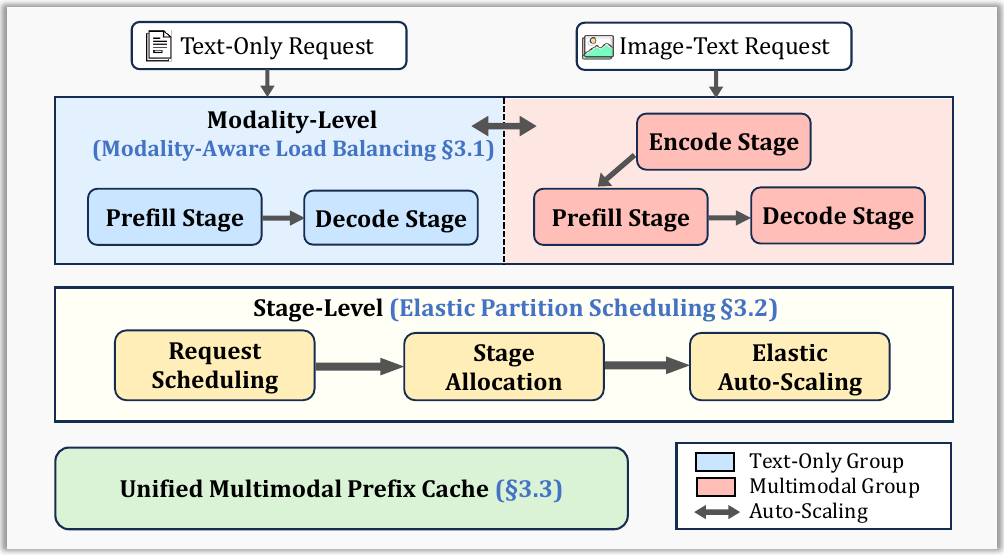}
   \caption{Framework diagram of ElasticMM. The figure illustrates a two-level scheduling framework that collaboratively enables elastic multimodal parallelism.}
   \label{fig3}
   \vspace{-5pt}
\end{figure*}

As discussed in the previous section, it is essential to build an elastic MLLM serving system that can dynamically adjust both resource allocation and parallelism strategies. To this end, we introduce a hierarchical framework that enables elastic scheduling, where all instances are organized into two levels, as illustrated in Fig.~\ref{fig3}.

At the first level, the \textbf{modality level}, instances are grouped based on the modality of the models they serve (e.g., text-only or multimodal). At the second level, the \textbf{stage level}, the inference pipeline is further disaggregated into distinct stages, such as encoding, prefill, and decode. At each level, our system provides both \textit{\textbf{decoupling}} and \textit{\textbf{elasticity}}, which are the two core strengths of ElasticMM. This design eliminates resource contention and maximizes utilization.

At the modality level, we design a \textit{modality-aware load balancing strategy} to dynamically allocate resources across modality groups in response to fluctuating request loads. At the stage level, we apply \textit{elastic partition scheduling} to enable flexible parallelism tailored to the needs of each inference stage.

\subsection{Modality-Aware Load Balancing}
\label{sec3.1}
To tackle the inter-modality load imbalance outlined in Section~\ref{sec:problem}, we combine both proactive and reactive mechanisms. Specifically, following prior approaches~\citep{zhang2023shepherd,li2023alpaserve,wu2024dlora}, we apply a proactive mechanism to allocate resources to different modality groups. However, proactive mechanism alone cannot effectively handle sudden request surges (e.g., bursty multi-image streams). We therefore design a reactive scaling mechanism to dynamically expand capacity upon detecting resource shortages.

\textbf{Proactive Mechanism.}
Our analysis of load patterns in inference services reveals twofold: although the short-term arrival pattern of a single request stream is difficult to predict, aggregating multiple streams results in smooth and periodic patterns (e.g., lower load at night, higher during the day). As observed in previous work~\citep{qiumodserve}, text-only requests exhibit stable loads, while multimodal traffic is marked by pronounced surges.
Based on the predictability of long-term workload, idle elastic instances can be proactively assigned to modality groups. Following previous work~\citep{zhang2023shepherd}, our goal in allocation is to maximize the minimum burst tolerance across modality groups. Burst tolerance ($bt$) is defined as the ratio between peak-available and average-required instances per group in Equation~\ref{eq1}. To achieve this, we employ a fast and effective greedy strategy: incrementally assigns each instance to the group currently lowest burst tolerance, continuing until resources are fully allocated.
{
\small
\begin{align}
bt(i) &= \frac{\text{\# Instances } i \text{ can use for its peak load }}{\text{\# Instances } i \text{ can use for its average load } } = \frac{N_i^{\text{peak}}}{N_i^{\text{avg}}}
\label{eq1}
\end{align}
}

\textbf{Reactive Scaling.}
Due to unpredictable short-term workloads, the system may face sudden traffic bursts, such as long-text or image-heavy requests in real-time scenarios. The system evaluates the trade-off between adjusting intra-group parallelism and triggering inter-group reactive scaling, based on a gain-cost model described in Section~\ref{sec3.2}. If inter-group reactive scaling is more beneficial, the modality-level manager selects instances to preempt from other groups with minimal impact. When an instance $E$ is preempted, its workload is merged into other instances at the same inference stage.

\subsection{Elastic Partition Scheduling}
\label{sec3.2}

After addressing inter-group GPU allocation, we turn to intra-group request scheduling and parallelism adjustment, adapting to the distinct characteristics of each inference stage. For example, prefill is compute-bound; decode is always memory-bound and scales poorly~\citep{zhong2024distserve,patel2024splitwise,jin2024p}. For stages with sublinear scalability, allocating more idle GPUs improves performance. Conversely, for stages with poor scalability, it is more efficient to limit the number of GPUs involved~\citep{wu2024loongserve}. Fig.~\ref{fig4} illustrates request assignments and parallel execution modes in the ElasticMM service, showing that instance counts vary by modality group and inference stage. We propose Elastic Partition Scheduling, orchestrated by the stage-level manager. This strategy decomposes the scheduling challenge into three subproblems: (1) \textit{\textbf{Request scheduling}}, (2) \textit{\textbf{Stage Allocation}}, and (3) \textit{\textbf{Elastic auto-scaling}}.

\begin{figure*}
   \centering
   \includegraphics[width=0.85\linewidth]{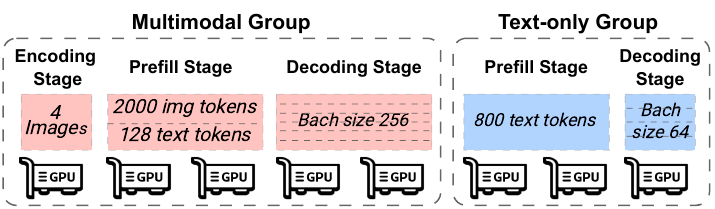}
   \caption{Illustration of the elastic scheduling space in EMP.}
   \label{fig4}
   \vspace{-5pt}
\end{figure*}

\textbf{Request Dispatching.}
This step selects a subset of pending requests $R_p \subseteq P$ from the queue $P$ a First-Come-First-Served (FCFS) policy for the prefill phase~\citep{kwon2023efficient,yu2022orca}. One exception: if a text-only dialogue is redirected to a multimodal group due to associated multimodal requests, it’s prioritized—this helps overlap migration overhead and frees KV slots earlier. The dispatch process must address two types of constraints: GPU memory and compute throughput. For memory constraints, the manager only adds requests to $R_p$ if sufficient KV slots are available~\citep{wu2023fast}. There's a tipping point where the system shifts from memory-bound to compute-bound. Before this point, adding requests to $R_p$ improves utilization; after that, additional requests degrade performance due to extended execution time. We estimate this point by analyzing the upper bound of prefill time under memory bound.

\begin{figure*}
   \centering
   \includegraphics[width=0.8\linewidth]{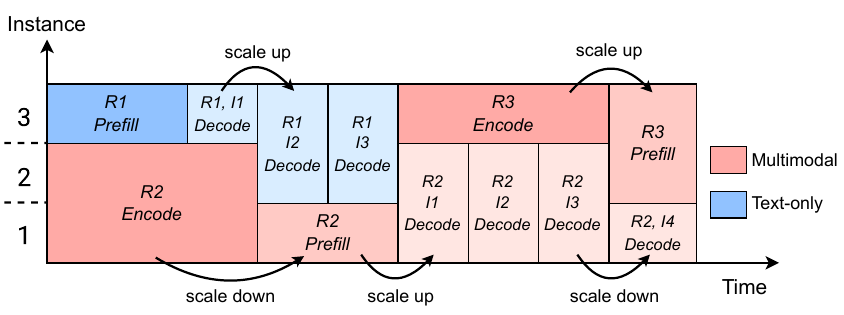}
   \caption{Example of elastic auto-scaling in three instance.}
   \label{fig5}
   \vspace{-5pt}
\end{figure*}

\textbf{Stage Allocation.}
Once the prefill request set $R_p$ is formed, the system allocates a set of elastic instances  $E_p$ to serve it, aiming to maximize GPU efficiency. Inspired by Loongserve~\citep{wu2024loongserve}, the system will prioritize allocating idle instances to $R_p$ for prefill stage. If the available KV cache slots in idle instances are insufficient, $R_p$ is permitted to preempt instances set $E_d$ in the decoding stage $B_d$. The instance with maximum unused slots are denoted as $e_{max}$. Meanwhile, drawing insights from existing research, compute-intensive stages like prefill can still benefit from increased parallelism. Therefore, the stage-level manager continuously considers further preemptions by selecting the $e_{max}$, migrating its KV cache to other decode instances. The gain-cost model is formulated as follows:
{\small
\begin{align}
\text{Gain} &= \sum_{r \in R_p} \frac{T(R_p, E_p) - T(R_p, E_p \cup e_{max})}{r.\text{input\_len}} &
\text{Cost} &= \sum_{r \in B_d} \frac{M(e_{max})+ w \cdot L(B_d,E_d - e_{max})}{r.\text{output\_len}}
\end{align}
}The gain of preemption is quantified as the acceleration gained by adding $e_{max}$ to the prefill process. The cost accounts for both migration overhead $M(e_{max})$ and the performance impact $L$ on the preempted computation. A tunable penalty factor $w$ is introduced to control the aggressiveness of preemption, enabling flexible system behavior under varying workloads. 
Within a single inference stage, we prioritize Data Parallelism (DP). Alternatives such as Tensor Parallelism are only employed when a single GPU cannot hold the model weights. This design offers several advantages: 1) During elastic scaling, only the KV cache needs to be migrated, avoiding expensive weight transfers; 2) DP enables more flexible utilization of fragmented GPU resources, whereas tensor parallelism typically requires an even number of devices and stricter placement constraints.

\textbf{Elasic Auto-Scaling.}
 ElasticMM supports elastic scaling, including at the stage level. Since resource allocation is primarily guided by prefill, we monitor the decode phase to trigger elastic adjustments. Given that the decode stage exhibits poor scalability, we shrink it to the minimum parallelism. When the GPU resource becomes insufficient during decoding, the system triggers reactive scaling. Based on prior observations, decode bottlenecks often occur at FFN layers and depend on batch size. We establish scaling thresholds through offline profiling. The scaling-up first tries to allocate idle instances from the same group. If that fails, it first selects a candidate instance $e_{\text{max}}$ from intra-group prefill instances and a candidate $e'_{\text{max}}$ from inter-group instances, using the following gain-cost model to estimate the trade-off between potential speedup and migration overhead. The instance with the highest net gain is selected for preemption. If inter-group preemption offers greater benefit, the system triggers the reactive scaling mechanism described in Section~\ref{sec3.1}.
 {\small
\begin{align}
\text{Gain} &= \sum_{r \in B_{d}} \frac{\text{AvgLat}_d - T(B_d, E_d \cup e_{max})}{r.\text{output\_len}} & 
\text{Cost} &= \sum_{r \in R'_{p}} \frac{M(e_{max})+ w \cdot L(R'_{p},E'_p-e_{max})}{r.\text{input\_len}}
\end{align}
}Fig.~\ref{fig5} visualizes auto-scaling. Given the higher computational complexity of multimodal encoding compared to text-only prefill and decode, $R2$'s encode stage is initially allocated more resources. During $R1$'s decoding progression, the growing token sequence increases both computational load and KV cache memory pressure. This necessitates expanding decode stage ($(R1, I2)$ and $(R1, I3)$). Later, when multimodal request $R3$ arrives, the multimodal group experiences resource contention. It preempts idle resources from the text-only group for encoding. After $R3$'s encoding, allocating more resources to its prefill stage yields better throughput, so $R3$ automatically scales up to more instances.

\subsection{Multimodal Inference Optimization}
\textbf{Unified Multimodal Prefix Cache.}
In real-world service scenarios, user requests often exhibit redundancy. For example, requests may share identical system prompts in their headers or involve repeated transmission of the same images. To address the issue of redundant computation and data transmission in multimodal inference systems, we propose a Unified Multimodal Prefix Cache optimization strategy. This strategy integrates text prefix caching with multimodal input caching to build a unified caching scheme. We categorize cached objects into two pools: one for tokens encoded from multimodal inputs and the other for prefix tokens from unified sequences (including both multimodal and text tokens). Following prior works~\citep{zheng2024sglang,juravsky2024hydragen}, each cache pool is managed with a Least Recently Used (LRU) dynamic eviction strategy to keep memory usage under control.
When a multimodal input is received, we generate a hash. If the hash matches an existing entry, we skip re-encoding and use the cached tokens. After merging with the text tokens, we check a prefix tree in the second cache pool to find the longest matching prefix. That portion of the sequence skips the prefill step and directly uses the cached key-value results. This unified caching mechanism significantly reduces redundant visual model overhead as well as repetitive computation in the language model.

\textbf{Non-blocking Encoding.}
As illustrated in Fig.~\ref{fig1}a, the vision encoder of a multimodal large model introduces substantial overhead due to image preprocessing and encoding—often taking more than five times longer than the prefill stage. However, existing inference frameworks tightly couple the encoding and prefill stages, leading to blocking behavior during encoding that delays subsequent stages. This adds latency to the first token response (TTFT) and reduces overall throughput. Our solution decouples the vision and language models by isolating image preprocessing and encoding into a separate process or even a separate instance—executed asynchronously. 
By incorporating Unified Multimodal Prefix Cache and Non-blocking Encoding, ElasticMM effectively reduces redundant computations and inter-stage interference, thereby improving the efficiency of multimodal inference.

\section{Evaluation}
\subsection{Experimental Setup}
\label{sec4.1}
\textbf{Model and Dataset.}
We select LLaMA3.2-Vision-11B~\citep{chi2024llama} and Qwen2.5-VL-7B~\citep{wang2024qwen2} to represent encoder-decoder and decoder-only architectures, respectively. Two open-source multimodal datasets are used, each containing a mix of multimodal and text-only requests: VisualWebInstruct\citep{jia2025visualwebinstruct} is a large-scale dataset collected from over 700K unique web URLs; ShareGPT-4o\citep{chen2024far} comprises 50K images of varying resolutions along with corresponding text prompts sourced from the multimodal GPT-4o model.
VisualWebInstruct contains longer average text inputs, while ShareGPT-4o includes higher-resolution images, making the two datasets complementary. This combination enhances the comprehensiveness and rigor of our evaluation. Following prior work~\citep{kwon2023efficient,wu2024loongserve}, we use a Poisson distribution to generate variable request arrival rates (requests per second, QPS) and incorporate real-world production service traces to simulate realistic workload distributions.

\textbf{Testbed.}
We evaluate ElasticMM on a high-end workstation equipped with eight NVIDIA A800 80GB GPUs, two 64-core Intel Xeon 8358P CPUs, and 2 TB of DDR4 memory. The NVLink bandwidth between any two GPUs is 400 GB/s. We leave multi-node distributed studies to future work, as this testbed already demonstrates superior performance.

\textbf{Baselines.}
We compare ElasticMM with two baselines. The first baseline uses a coupled system for MLLM serving, with the SOTA system vLLM~\citep{kwon2023efficient} (v0.6.6) as representative. It follows the architecture of this version, with additional model code to support Qwen2.5-VL. In this system, all inference stages are coupled and executed on the same hardware. The second baseline is DistServe~\citep{zhong2024distserve}, a decoupled architecture that colocates the encode and prefill stages while separating them from decode, using static resource allocation. We extend this system to support multimodal inference. To evaluate the effectiveness of our proposed techniques, we build ElasticMM on top of vLLM and construct several variants by selectively enabling each technique for ablation studies, allowing us to isolate their individual contributions. Further implementation details are provided in Appendix~\ref{implementation}.

\textbf{Metrics.}
We evaluate service quality using latency and throughput metrics. For each request rate, we measure the \textit{normalized input latency} (i.e., average prefill time divided by input length) and \textit{normalized output latency} (i.e., average decoding time divided by output length). To compare system throughput under consistent conditions, we define a uniform SLO and record the maximum throughput achievable within that target. Following prior work~\citep{wu2023fast,wu2024loongserve}, we set the SLO to 10$\times$ the latency under light load and then scale it with a constant factor (ranging from one to five) to evaluate performance under both relaxed and strict conditions.

\begin{figure}
\centering
    \setlength{\fboxsep}{0pt}      
    \setlength{\tabcolsep}{0pt}    
    \setlength{\parindent}{0pt}    
    \offinterlineskip               
   \includegraphics[width=1.0\linewidth]{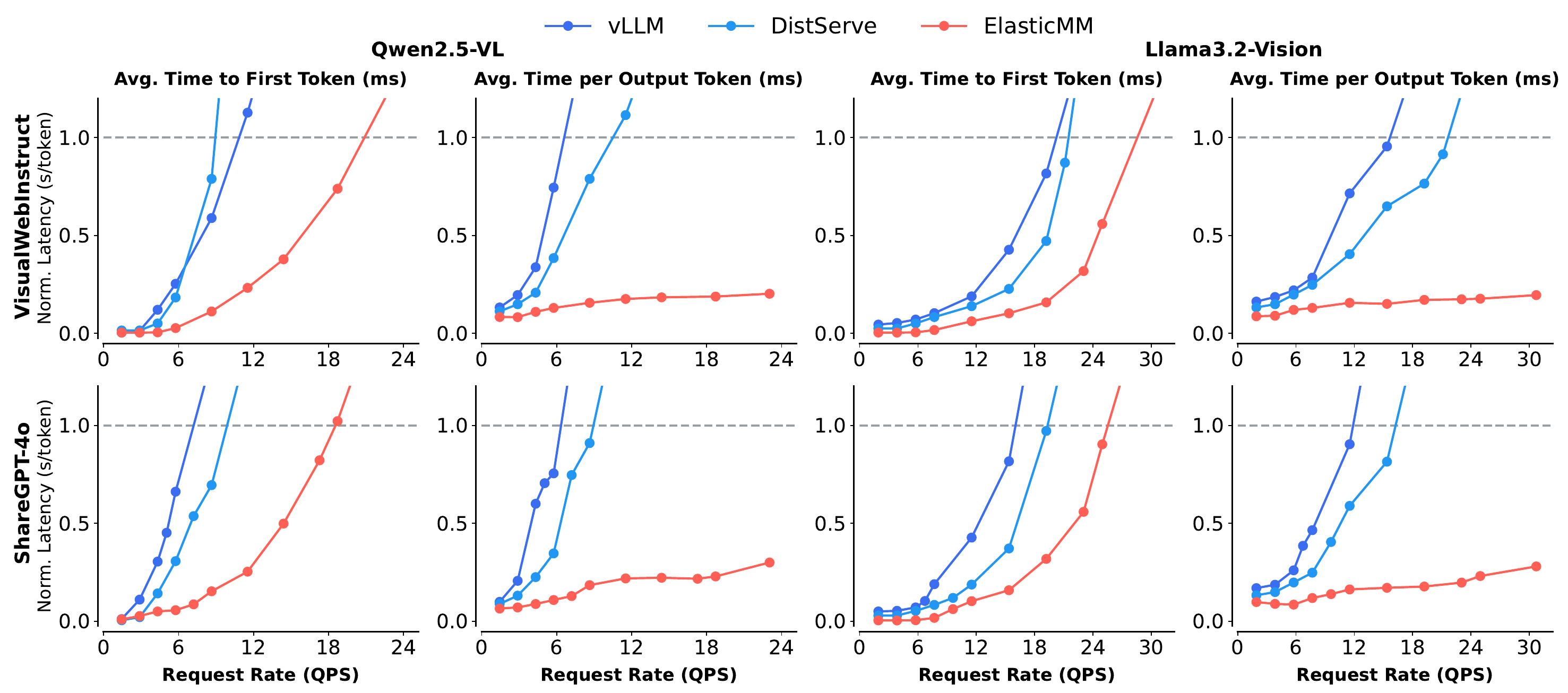}
   \caption{The average input and output latency of ElasticMM and baseline MLLM serving systems with the Qwen2.5-VL-7B and Llama3.2-Vision-11B under two real-world workloads. ElasticMM consistently demonstrates the lowest latency across all cases.}
   \vspace{-5pt}
   \label{fig6}
\end{figure}

\subsection{End-to-End Performance}
\textbf{Input Latency.} In our evaluation on two open-source large multimodal models, we compare ElasticMM against two baselines using two real-world workloads, focusing on input and output latency metrics. Fig.~\ref{fig6} presents the performance results, where higher request rates correspond to heavier workloads, and lower latency are better. ElasticMM benefits from its decoupled design, the prefill stage proceeds without interference from image encoding. As a result, ElasticMM significantly reduces \textit{input token latency}, i.e., time-to-first-token (TTFT). Although the DistServe shows some improvements, its static resource allocation cannot dynamically adjust under unbalanced loads, leading to underutilized resources and a rapid increase in latency at higher request rates. Fig.~\ref{fig6} demonstrates that ElasticMM consistently achieves the lowest input latency across all load levels. On the ShareGPT-4o dataset, ElasticMM reduces TTFT by up to 4.2$\times$ and 3.5$\times$ for Qwen2.5-VL (decoder-only) and LLaMA3.2-Vision (encoder-decoder), respectively. On VisualWebInstruct, TTFT is reduced by up to 3.7$\times$ and 2.9$\times$. The more substantial gains on decoder-only models can be attributed to their heavier prefill computation, which exacerbates conflicts with image encoding. Moreover, the stronger performance gains on the more visually intensive ShareGPT-4o dataset further validate the effectiveness of our multimodal inference optimizations.

\textbf{Output Latency.} Since ElasticMM decouples computation stages and elastically allocates them across separate instances, the decoding phase is well-isolated from the interference of encoding or prefill stages. This results in consistently lower output latency, outperforming all baselines. In comparison, vLLM executes all stages on the same instance, leading to severe resource contention as request rates increase, where interference from the encode and prefill stages severely degrades the performance of the decode stage and results in a marked increase in output latency. While DistServe moderately alleviates latency by separating the prefill and decode stages, its static resource allocation causes memory contention on decode nodes at higher request rates, leading to a significant increase in output latency.

\begin{figure}
\centering
    \includegraphics[width=\linewidth]{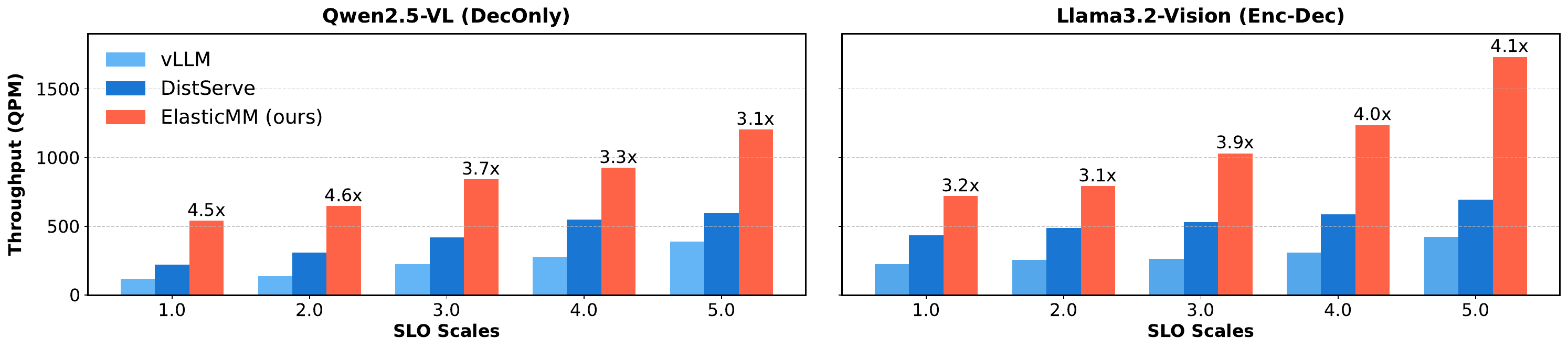}
    \caption{Maximum throughput meeting SLO.}
   \vspace{-5pt}
   \label{fig7}
\end{figure}

\begin{figure}[htbp]

    \centering
    \setlength{\fboxsep}{0pt}      
    \setlength{\tabcolsep}{0pt}    
    \setlength{\parindent}{0pt}    
    \offinterlineskip               
    
    \begin{minipage}{0.5\textwidth}
        \includegraphics[width=\linewidth]{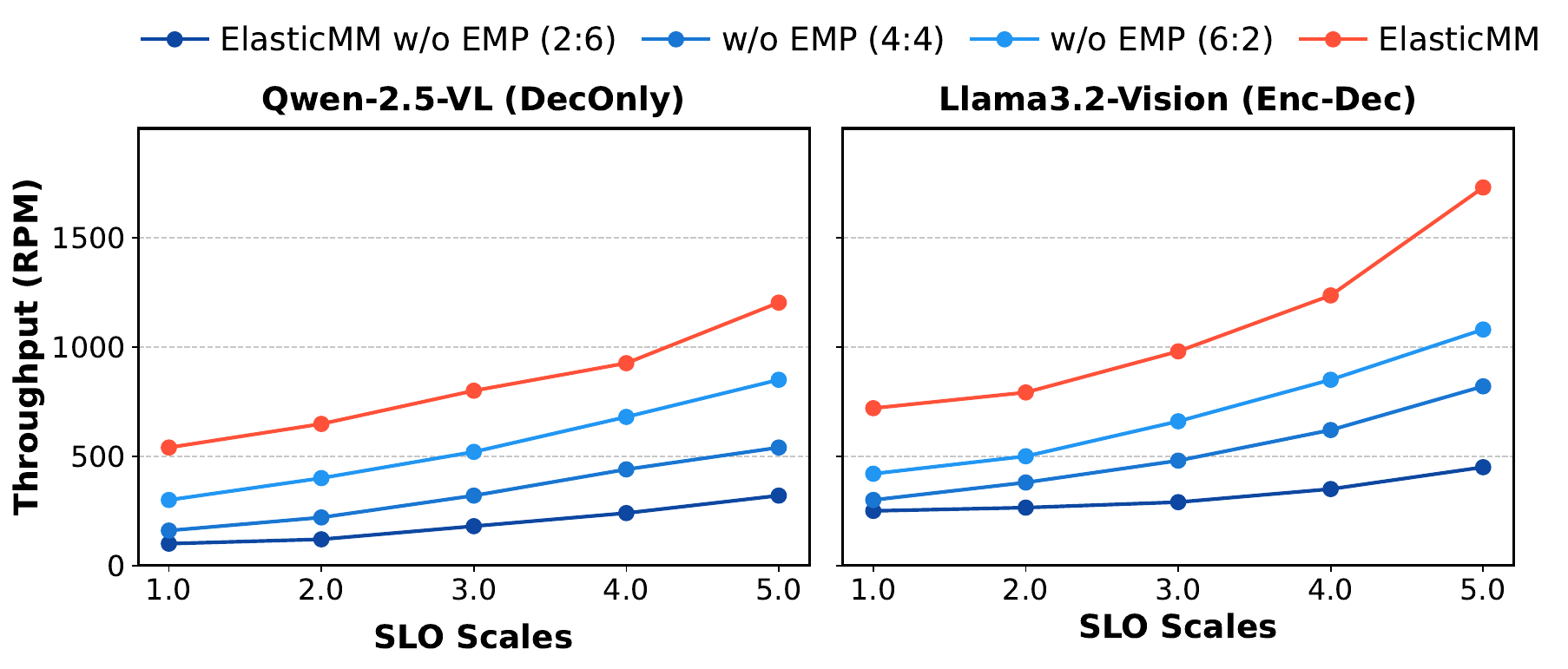}
        \caption{Throughput impact of resource allocation}
        \label{fig8}
    \end{minipage}%
    \begin{minipage}{0.5\textwidth}
        \includegraphics[width=\linewidth]{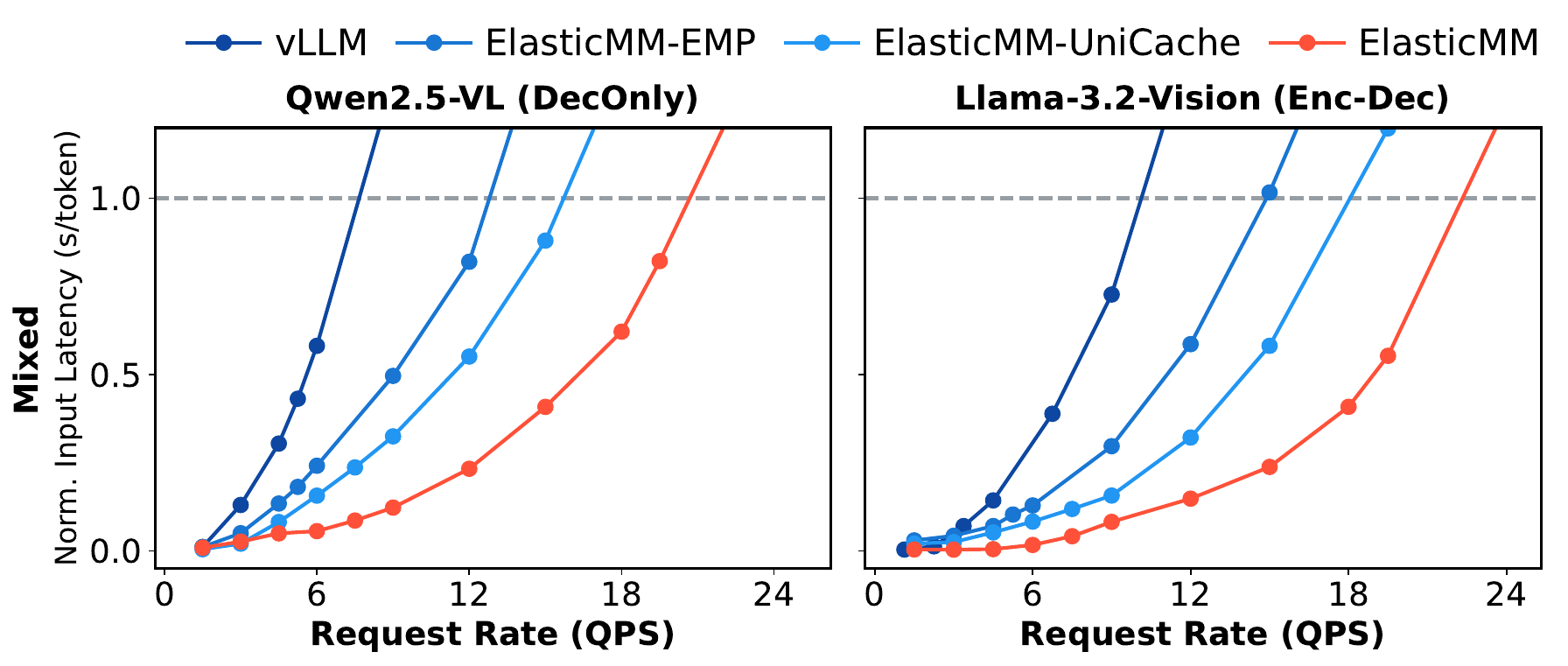}
        \caption{Ablation studies of our optimizations.}
        \label{fig9}
    \end{minipage}%
    \vspace{-5pt}
\end{figure}

\textbf{Throughput.} Fig.~\ref{fig7} further evaluates the maximum throughput under linearly scaled service-level objectives (SLOs), ranging from 1$\times$ to 5$\times$, simulating both strict and relaxed service conditions. ElasticMM achieves the highest throughput across all SLO settings. Specifically, on the ShareGPT-4o dataset, it delivers up to 4.5$\times$ and 3.2$\times$ higher throughput than vLLM on Qwen2.5-VL and LLaMA3.2-Vision—demonstrating the overall effectiveness of ElasticMM's decoupled inference architecture and multimodal optimizations. Furthermore, ElasticMM achieves up to 2.3$\times$ higher throughput than DistServe, benefiting from its hierarchical elastic scheduling strategy and non-blocking encoding design.

\subsection{Ablation Study}


\textbf{Effectiveness of Elastic Multimodal Parallelism.} To evaluate the effectiveness of Elastic Multimodal Parallelism (EMP), we conduct an ablation study on the achievable P90 effective throughput under varying scaled SLOs. The baselines include three static resource allocation strategies: (1) a text-dominant policy, (2) an equal allocation policy, and (3) a multimodal-dominant policy. These represent ElasticMM variants without EMP enabled, with resource ratios defined by the number of multimodal versus text instances. To better demonstrate EMP's strength under heavy load, we use the ShareGPT-4o dataset, which contains images with higher average resolution. The results are shown in Fig.~\ref{fig8}. As illustrated, static resource allocation leads to suboptimal performance—even when equipped with the two key multimodal inference optimizations. Whether biased toward a specific modality group or evenly distributed, static schemes cannot efficiently handle dynamically shifting inference workloads. In contrast, ElasticMM dynamically adjusts inter-group resource allocation via load balancing and applies elastic partitioned scheduling to enable fine-grained parallelism across different inference stages. As a result, ElasticMM achieves substantial throughput improvements on both representative models: 1.8$\times$ for Qwen-2.5-VL and 2.3$\times$ for Llama-3.2-Vision.

\textbf{Effectiveness of MLLM Inference Optimizations.} To evaluate the effectiveness of the two optimization techniques—Unified Multimodal Prefix Cache and Non-blocking Encoding—in reducing input token latency (i.e., TTFT), we conduct an ablation study on top of ElasticMM with EMP enabled. We incrementally apply the two optimizations to assess their individual and combined benefits. The baselines include: the system without either optimization (ElasticMM-EMP), the system with only Unified Multimodal Prefix Cache (ElasticMM-UniCache), and the fully optimized system (ElasticMM). To demonstrate the robustness of these optimizations, we generate requests by sampling from a mixed dataset composed of two distinct sources. As shown in Fig.~\ref{fig9}, applying EMP alone provides limited improvements in token input latency. The Unified Multimodal Prefix Cache significantly reduces redundant computation and data transfer in both the vision and language models, thus effectively lowering latency. The Non-blocking Encoding technique eliminates the blocking effect between the image encoding stage and subsequent inference stages, enabling a more efficient inference pipeline and leading to further latency reduction. The normalized token input latency results confirm that both optimizations provide consistent performance gains across most requests.

\section{Related Work}
\textbf{MLLM Serving Optimizations.} 
Recent work on MLLM serving focuses on improving inference computation and memory efficiency. For example, MobileVLM~\citep{wu2024mobilevlm}, TinyGPT-V~\citep{yuan2023tinygpt}, and TinyLLaVA~\citep{zhou2024tinyllava} reduce the backbone model size to improve speed. Other approaches, such as Dynamic-LLaVA~\citep{huang2024dynamic}, LLaVA-Mini~\citep{zhang2025llava}, and VTM~\citep{lin2025boosting}, introduce visual token pruning or compression to reduce context length. InfMLLM~\citep{ning2024inf} and Elastic Cache~\citep{liu2024efficient} reduce memory usage by pruning key-value (KV) caches based on token importance. These methods operate at the model level, \textit{trading off accuracy for computational efficiency}, which can lead to degraded performance on vision tasks and may be constrained by specific model architectures. In contrast, our approach introduces \textit{no accuracy degradation}, and is compatible with all MLLM architectures. Therefore, we do not compare against these optimization methods in this work. Recent work~\citep{qiumodserve,singh2024efficiently} attempts to separate imultimodal model inference stages, but both types of requests still remain batched together during LLM backend inference, leading to persistent efficiency issues with Decoder-Encoder architectures. In contrast, the key distinction of our work is the introduction of modality group isolation alongside stage separation, creating a two-tier separation architecture, upon which we develop an elastic scheduling mechanism across modality groups. 

\textbf{LLM Serving Optimizations.} Recent work on optimizing general LLM serving has explored disaggregated architectures, including Splitwise~\citep{patel2024splitwise}, DistServe~\citep{zhong2024distserve}, and Mooncake~\citep{qin2024mooncake}, but their static parallelism and partitioning strategies lack the flexibility to handle dynamic workloads. Other studies have focused on improving GPU operator efficiency, such as FlashAttention~\citep{dao2022flashattention} and Flash-Decoding~\citep{dao2023flashattention}. These methods are orthogonal to ElasticMM and can be integrated to further enhance the performance of its LLM backend. Additional optimizations target KV cache management~\citep{kwon2023efficient}, request scheduling~\citep{sun2024llumnix,qiu2024efficient,patke2024queue,su2025seesaw}, and batching efficiency~\citep{yu2022orca,agrawal2023sarathi,holmes2024deepspeed}. FlexPipe~\citep{zhao2025flexpipe} introduces a dynamic pipeline parallelism framework to improve training efficiency. While these systems achieve strong performance in traditional LLM serving, they rely on tightly coupled architectures that do not meet the unique requirements of MLLM workloads. Therefore, we implement ElasticMM on top of the SOTA LLM serving system vLLM, without incorporating these additional optimizations.

\section{Conclusion}
In this paper, we first analyze the limitations of existing systems, which exhibit tightly coupled designs when handling serving MLLMs, making them inefficient under multimodal workloads. We propose two key insights: an efficient MLLM serving system must be both decoupled and elastic.
Based on these principles, we introduce Elastic Multimodal Parallelism, a novel serving paradigm implemented in our system ElasticMM. ElasticMM incorporates three key innovations: (1) modality-aware load balancing, (2) elastic partition scheduling, and (3) multimodal inference optimizations, enabling dynamic resource adaptation across different request types and inference stages. Comprehensive evaluations on diverse real-world datasets show that ElasticMM achieves up to 4.2$\times$ reduction in TTFT and 3.2-4.5$\times$ higher throughput compared to vLLM while consistently meeting SLOs, establishing a new paradigm for scalable multimodal AI service infrastructure.

In future work, we plan to extend our evaluation to large-scale, multi-node clusters. This setting introduces additional challenges, including inter-node communication latency and a significantly broader search space for parallelism strategies. We leave the exploration of these challenges to future work, as we believe they open up important avenues for advancing scalable and efficient multimodal serving systems.

\section*{Acknowledgments}
We thank the anonymous reviewers for their insightful comments and feedback. 
This work was supported in part by the National Key Research and Development Program of China (Grant No. 2025YFB30037002), 
the National Natural Science Foundation of China (Grant Nos. 62032023 and T2125013), 
and the Innovation Funding of ICT, CAS (Grant No. E461050). Yang You's research group is being sponsored by NUS startup grant (Presidential Young Professorship), Singapore MOE Tier-1 grant, ARCTIC grant.

{
\small
\bibliographystyle{IEEEtran}
\bibliography{neurips_2025}
}


\newpage
\section*{NeurIPS Paper Checklist}

\begin{enumerate}

\item {\bf Claims}
    \item[] Question: Do the main claims made in the abstract and introduction accurately reflect the paper's contributions and scope?
    \item[] Answer: \answerYes{}
    \item[] Justification: The abstract and introduction accurately summarize our key contributions, methodology, and experimental results. Section \ref{sec1} explicitly outlines our claims and their scope.
    \item[] Guidelines:
    \begin{itemize}
        \item The answer NA means that the abstract and introduction do not include the claims made in the paper.
        \item The abstract and/or introduction should clearly state the claims made, including the contributions made in the paper and important assumptions and limitations. A No or NA answer to this question will not be perceived well by the reviewers. 
        \item The claims made should match theoretical and experimental results, and reflect how much the results can be expected to generalize to other settings. 
        \item It is fine to include aspirational goals as motivation as long as it is clear that these goals are not attained by the paper. 
    \end{itemize}

\item {\bf Limitations}
    \item[] Question: Does the paper discuss the limitations of the work performed by the authors?
    \item[] Answer: \answerYes{} 
    \item[] Justification: In Section \ref{sec4.1}, we explicitly discuss the limitations of our system, noting that the current implementation has only been tested in a single-node environment, and we identify multi-node deployment and testing on larger clusters as important directions for future work.
    \item[] Guidelines:
    \begin{itemize}
        \item The answer NA means that the paper has no limitation while the answer No means that the paper has limitations, but those are not discussed in the paper. 
        \item The authors are encouraged to create a separate "Limitations" section in their paper.
        \item The paper should point out any strong assumptions and how robust the results are to violations of these assumptions (e.g., independence assumptions, noiseless settings, model well-specification, asymptotic approximations only holding locally). The authors should reflect on how these assumptions might be violated in practice and what the implications would be.
        \item The authors should reflect on the scope of the claims made, e.g., if the approach was only tested on a few datasets or with a few runs. In general, empirical results often depend on implicit assumptions, which should be articulated.
        \item The authors should reflect on the factors that influence the performance of the approach. For example, a facial recognition algorithm may perform poorly when image resolution is low or images are taken in low lighting. Or a speech-to-text system might not be used reliably to provide closed captions for online lectures because it fails to handle technical jargon.
        \item The authors should discuss the computational efficiency of the proposed algorithms and how they scale with dataset size.
        \item If applicable, the authors should discuss possible limitations of their approach to address problems of privacy and fairness.
        \item While the authors might fear that complete honesty about limitations might be used by reviewers as grounds for rejection, a worse outcome might be that reviewers discover limitations that aren't acknowledged in the paper. The authors should use their best judgment and recognize that individual actions in favor of transparency play an important role in developing norms that preserve the integrity of the community. Reviewers will be specifically instructed to not penalize honesty concerning limitations.
    \end{itemize}

\item {\bf Theory assumptions and proofs}
    \item[] Question: For each theoretical result, does the paper provide the full set of assumptions and a complete (and correct) proof?
    \item[] Answer: \answerYes{} 
    \item[] Justification: Our experiments use multiple mainstream open-source models and datasets, making them inherently reproducible. We will soon release our code for full reproduction.
    \item[] Guidelines:
    \begin{itemize}
        \item The answer NA means that the paper does not include theoretical results. 
        \item All the theorems, formulas, and proofs in the paper should be numbered and cross-referenced.
        \item All assumptions should be clearly stated or referenced in the statement of any theorems.
        \item The proofs can either appear in the main paper or the supplemental material, but if they appear in the supplemental material, the authors are encouraged to provide a short proof sketch to provide intuition. 
        \item Inversely, any informal proof provided in the core of the paper should be complemented by formal proofs provided in appendix or supplemental material.
        \item Theorems and Lemmas that the proof relies upon should be properly referenced. 
    \end{itemize}

    \item {\bf Experimental result reproducibility}
    \item[] Question: Does the paper fully disclose all the information needed to reproduce the main experimental results of the paper to the extent that it affects the main claims and/or conclusions of the paper (regardless of whether the code and data are provided or not)?
    \item[] Answer: \answerYes{} 
    \item[] Justification: Our experiments use multiple mainstream open-source models and datasets, making them inherently reproducible. We will soon release our code for full reproduction.
    \item[] Guidelines:
    \begin{itemize}
        \item The answer NA means that the paper does not include experiments.
        \item If the paper includes experiments, a No answer to this question will not be perceived well by the reviewers: Making the paper reproducible is important, regardless of whether the code and data are provided or not.
        \item If the contribution is a dataset and/or model, the authors should describe the steps taken to make their results reproducible or verifiable. 
        \item Depending on the contribution, reproducibility can be accomplished in various ways. For example, if the contribution is a novel architecture, describing the architecture fully might suffice, or if the contribution is a specific model and empirical evaluation, it may be necessary to either make it possible for others to replicate the model with the same dataset, or provide access to the model. In general. releasing code and data is often one good way to accomplish this, but reproducibility can also be provided via detailed instructions for how to replicate the results, access to a hosted model (e.g., in the case of a large language model), releasing of a model checkpoint, or other means that are appropriate to the research performed.
        \item While NeurIPS does not require releasing code, the conference does require all submissions to provide some reasonable avenue for reproducibility, which may depend on the nature of the contribution. For example
        \begin{enumerate}
            \item If the contribution is primarily a new algorithm, the paper should make it clear how to reproduce that algorithm.
            \item If the contribution is primarily a new model architecture, the paper should describe the architecture clearly and fully.
            \item If the contribution is a new model (e.g., a large language model), then there should either be a way to access this model for reproducing the results or a way to reproduce the model (e.g., with an open-source dataset or instructions for how to construct the dataset).
            \item We recognize that reproducibility may be tricky in some cases, in which case authors are welcome to describe the particular way they provide for reproducibility. In the case of closed-source models, it may be that access to the model is limited in some way (e.g., to registered users), but it should be possible for other researchers to have some path to reproducing or verifying the results.
        \end{enumerate}
    \end{itemize}

\item {\bf Open access to data and code}
    \item[] Question: Does the paper provide open access to the data and code, with sufficient instructions to faithfully reproduce the main experimental results, as described in supplemental material?
    \item[] Answer: \answerYes{} 
    \item[] Justification: We have provided links to all open-source data and models used in our experiments, and our code will be made publicly available shortly.
    \item[] Guidelines:
    \begin{itemize}
        \item The answer NA means that paper does not include experiments requiring code.
        \item Please see the NeurIPS code and data submission guidelines (\url{https://nips.cc/public/guides/CodeSubmissionPolicy}) for more details.
        \item While we encourage the release of code and data, we understand that this might not be possible, so “No” is an acceptable answer. Papers cannot be rejected simply for not including code, unless this is central to the contribution (e.g., for a new open-source benchmark).
        \item The instructions should contain the exact command and environment needed to run to reproduce the results. See the NeurIPS code and data submission guidelines (\url{https://nips.cc/public/guides/CodeSubmissionPolicy}) for more details.
        \item The authors should provide instructions on data access and preparation, including how to access the raw data, preprocessed data, intermediate data, and generated data, etc.
        \item The authors should provide scripts to reproduce all experimental results for the new proposed method and baselines. If only a subset of experiments are reproducible, they should state which ones are omitted from the script and why.
        \item At submission time, to preserve anonymity, the authors should release anonymized versions (if applicable).
        \item Providing as much information as possible in supplemental material (appended to the paper) is recommended, but including URLs to data and code is permitted.
    \end{itemize}

\item {\bf Experimental setting/details}
    \item[] Question: Does the paper specify all the training and test details (e.g., data splits, hyperparameters, how they were chosen, type of optimizer, etc.) necessary to understand the results?
    \item[] Answer: \answerYes{} 
    \item[] Justification: Core experimental settings are described in section \ref{sec4.1}, with additional details provided in the appendix \ref{implementation}.
    \item[] Guidelines:
    \begin{itemize}
        \item The answer NA means that the paper does not include experiments.
        \item The experimental setting should be presented in the core of the paper to a level of detail that is necessary to appreciate the results and make sense of them.
        \item The full details can be provided either with the code, in appendix, or as supplemental material.
    \end{itemize}

\item {\bf Experiment statistical significance}
    \item[] Question: Does the paper report error bars suitably and correctly defined or other appropriate information about the statistical significance of the experiments?
    \item[] Answer: \answerYes{} 
    \item[] Justification: We provide detailed reporting of error bars and their calculation methods in the supplementary materials.
    \item[] Guidelines:
    \begin{itemize}
        \item The answer NA means that the paper does not include experiments.
        \item The authors should answer "Yes" if the results are accompanied by error bars, confidence intervals, or statistical significance tests, at least for the experiments that support the main claims of the paper.
        \item The factors of variability that the error bars are capturing should be clearly stated (for example, train/test split, initialization, random drawing of some parameter, or overall run with given experimental conditions).
        \item The method for calculating the error bars should be explained (closed form formula, call to a library function, bootstrap, etc.)
        \item The assumptions made should be given (e.g., Normally distributed errors).
        \item It should be clear whether the error bar is the standard deviation or the standard error of the mean.
        \item It is OK to report 1-sigma error bars, but one should state it. The authors should preferably report a 2-sigma error bar than state that they have a 96\% CI, if the hypothesis of Normality of errors is not verified.
        \item For asymmetric distributions, the authors should be careful not to show in tables or figures symmetric error bars that would yield results that are out of range (e.g. negative error rates).
        \item If error bars are reported in tables or plots, The authors should explain in the text how they were calculated and reference the corresponding figures or tables in the text.
    \end{itemize}

\item {\bf Experiments compute resources}
    \item[] Question: For each experiment, does the paper provide sufficient information on the computer resources (type of compute workers, memory, time of execution) needed to reproduce the experiments?
    \item[] Answer: \answerYes{}
    \item[] Justification: Section \ref{sec4.1} provides detailed information about the computational resources used in our experiments.
    \item[] Guidelines:
    \begin{itemize}
        \item The answer NA means that the paper does not include experiments.
        \item The paper should indicate the type of compute workers CPU or GPU, internal cluster, or cloud provider, including relevant memory and storage.
        \item The paper should provide the amount of compute required for each of the individual experimental runs as well as estimate the total compute. 
        \item The paper should disclose whether the full research project required more compute than the experiments reported in the paper (e.g., preliminary or failed experiments that didn't make it into the paper). 
    \end{itemize}
    
\item {\bf Code of ethics}
    \item[] Question: Does the research conducted in the paper conform, in every respect, with the NeurIPS Code of Ethics \url{https://neurips.cc/public/EthicsGuidelines}?
    \item[] Answer: \answerYes{} 
    \item[] Justification: Our research fully complies with the NeurIPS Code of Ethics.
    \item[] Guidelines:
    \begin{itemize}
        \item The answer NA means that the authors have not reviewed the NeurIPS Code of Ethics.
        \item If the authors answer No, they should explain the special circumstances that require a deviation from the Code of Ethics.
        \item The authors should make sure to preserve anonymity (e.g., if there is a special consideration due to laws or regulations in their jurisdiction).
    \end{itemize}

\item {\bf Broader impacts}
    \item[] Question: Does the paper discuss both potential positive societal impacts and negative societal impacts of the work performed?
    \item[] Answer: \answerYes{} 
    \item[] Justification: ElasticMM provides positive societal impact by introducing a new paradigm for multimodal large model serving and deployment, contributing to better and faster AI service implementation.
    \item[] Guidelines:
    \begin{itemize}
        \item The answer NA means that there is no societal impact of the work performed.
        \item If the authors answer NA or No, they should explain why their work has no societal impact or why the paper does not address societal impact.
        \item Examples of negative societal impacts include potential malicious or unintended uses (e.g., disinformation, generating fake profiles, surveillance), fairness considerations (e.g., deployment of technologies that could make decisions that unfairly impact specific groups), privacy considerations, and security considerations.
        \item The conference expects that many papers will be foundational research and not tied to particular applications, let alone deployments. However, if there is a direct path to any negative applications, the authors should point it out. For example, it is legitimate to point out that an improvement in the quality of generative models could be used to generate deepfakes for disinformation. On the other hand, it is not needed to point out that a generic algorithm for optimizing neural networks could enable people to train models that generate Deepfakes faster.
        \item The authors should consider possible harms that could arise when the technology is being used as intended and functioning correctly, harms that could arise when the technology is being used as intended but gives incorrect results, and harms following from (intentional or unintentional) misuse of the technology.
        \item If there are negative societal impacts, the authors could also discuss possible mitigation strategies (e.g., gated release of models, providing defenses in addition to attacks, mechanisms for monitoring misuse, mechanisms to monitor how a system learns from feedback over time, improving the efficiency and accessibility of ML).
    \end{itemize}
    
\item {\bf Safeguards}
    \item[] Question: Does the paper describe safeguards that have been put in place for responsible release of data or models that have a high risk for misuse (e.g., pretrained language models, image generators, or scraped datasets)?
    \item[] Answer: \answerNA{} 
    \item[] Justification: This paper focuses on inference processes and does not include model training, so these risks are not applicable.
    \item[] Guidelines:
    \begin{itemize}
        \item The answer NA means that the paper poses no such risks.
        \item Released models that have a high risk for misuse or dual-use should be released with necessary safeguards to allow for controlled use of the model, for example by requiring that users adhere to usage guidelines or restrictions to access the model or implementing safety filters. 
        \item Datasets that have been scraped from the Internet could pose safety risks. The authors should describe how they avoided releasing unsafe images.
        \item We recognize that providing effective safeguards is challenging, and many papers do not require this, but we encourage authors to take this into account and make a best faith effort.
    \end{itemize}

\item {\bf Licenses for existing assets}
    \item[] Question: Are the creators or original owners of assets (e.g., code, data, models), used in the paper, properly credited and are the license and terms of use explicitly mentioned and properly respected?
    \item[] Answer: \answerYes{} 
    \item[] Justification: We respect the intellectual property rights of the original authors of the data and models used in this paper, citing them appropriately and using them in full compliance with their respective licenses.
    \item[] Guidelines:
    \begin{itemize}
        \item The answer NA means that the paper does not use existing assets.
        \item The authors should cite the original paper that produced the code package or dataset.
        \item The authors should state which version of the asset is used and, if possible, include a URL.
        \item The name of the license (e.g., CC-BY 4.0) should be included for each asset.
        \item For scraped data from a particular source (e.g., website), the copyright and terms of service of that source should be provided.
        \item If assets are released, the license, copyright information, and terms of use in the package should be provided. For popular datasets, \url{paperswithcode.com/datasets} has curated licenses for some datasets. Their licensing guide can help determine the license of a dataset.
        \item For existing datasets that are re-packaged, both the original license and the license of the derived asset (if it has changed) should be provided.
        \item If this information is not available online, the authors are encouraged to reach out to the asset's creators.
    \end{itemize}

\item {\bf New assets}
    \item[] Question: Are new assets introduced in the paper well documented and is the documentation provided alongside the assets?
    \item[] Answer: \answerNA{} 
    \item[] Justification:  The paper does not release new assets.839
    \item[] Guidelines:
    \begin{itemize}
        \item The answer NA means that the paper does not release new assets.
        \item Researchers should communicate the details of the dataset/code/model as part of their submissions via structured templates. This includes details about training, license, limitations, etc. 
        \item The paper should discuss whether and how consent was obtained from people whose asset is used.
        \item At submission time, remember to anonymize your assets (if applicable). You can either create an anonymized URL or include an anonymized zip file.
    \end{itemize}

\item {\bf Crowdsourcing and research with human subjects}
    \item[] Question: For crowdsourcing experiments and research with human subjects, does the paper include the full text of instructions given to participants and screenshots, if applicable, as well as details about compensation (if any)? 
    \item[] Answer: \answerNA{} 
    \item[] Justification:This paper does not involve crowdsourcing or research with human subjects.
    \item[] Guidelines:
    \begin{itemize}
        \item The answer NA means that the paper does not involve crowdsourcing nor research with human subjects.
        \item Including this information in the supplemental material is fine, but if the main contribution of the paper involves human subjects, then as much detail as possible should be included in the main paper. 
        \item According to the NeurIPS Code of Ethics, workers involved in data collection, curation, or other labor should be paid at least the minimum wage in the country of the data collector. 
    \end{itemize}

\item {\bf Institutional review board (IRB) approvals or equivalent for research with human subjects}
    \item[] Question: Does the paper describe potential risks incurred by study participants, whether such risks were disclosed to the subjects, and whether Institutional Review Board (IRB) approvals (or an equivalent approval/review based on the requirements of your country or institution) were obtained?
    \item[] Answer: \answerNA{} 
    \item[] Justification: This paper does not involve research with human subjects.
    \item[] Guidelines:
    \begin{itemize}
        \item The answer NA means that the paper does not involve crowdsourcing nor research with human subjects.
        \item Depending on the country in which research is conducted, IRB approval (or equivalent) may be required for any human subjects research. If you obtained IRB approval, you should clearly state this in the paper. 
        \item We recognize that the procedures for this may vary significantly between institutions and locations, and we expect authors to adhere to the NeurIPS Code of Ethics and the guidelines for their institution. 
        \item For initial submissions, do not include any information that would break anonymity (if applicable), such as the institution conducting the review.
    \end{itemize}

\item {\bf Declaration of LLM usage}
    \item[] Question: Does the paper describe the usage of LLMs if it is an important, original, or non-standard component of the core methods in this research? Note that if the LLM is used only for writing, editing, or formatting purposes and does not impact the core methodology, scientific rigorousness, or originality of the research, declaration is not required.
    \item[] Answer: \answerNA{} 
    \item[] Justification: The core method development in this research does not involve LLMs as any important, original, or non-standard components.
    \item[] Guidelines:
    \begin{itemize}
        \item The answer NA means that the core method development in this research does not involve LLMs as any important, original, or non-standard components.
        \item Please refer to our LLM policy (\url{https://neurips.cc/Conferences/2025/LLM}) for what should or should not be described.
    \end{itemize}

\end{enumerate}


\appendix

\section*{Appendix A: Implementation Details}
\label{implementation}
We implement ElasticMM using 7,000 lines of code based on Python, C++. We build upon vLLM~\citep{kwon2023efficient} (v0.6.6), a stateof-the-art generative model inference platform. To ensure efficient GPU-to-GPU memory transfer of KV cache, we use PyTorch's distributed communication with the NCCL backend and GPU Direct RDMA. 

The frontend of ElasticMM uses the OpenAI API format, identical to vLLM, allowing users who have previously used vLLM to send requests to ElasticMM without any modifications. ElasticMM's Modality-level manager and Stage-level manager are primarily implemented in Python. However, some core logic, such as Stage Allocation, is implemented in C++ to improve the efficiency of loop functions. The Modality-level manager uses Ray for communication between elastic instances, while the Stage-Level manager assigns each batch to a Python coroutine to manage multiple batches simultaneously. Similar to vLLM, when tensor parallelism is enabled, the Stage-Level manager sends information to a single rank in the elastic instance, which then broadcasts the information to other ranks using NCCL.

For each elastic instance, ElasticMM manages the KV cache pool using PagedAttention at the granularity of a single token. Referencing previous work~\citep{zheng2024sglang}, we employ an LRU strategy for dynamic release in the cache pool. Specifically, each KV cache node in the prefix tree maintains a user count, and when this count drops to zero, it becomes eligible for eviction. As GPU memory quickly fills with KV cache, the eviction manager releases KV nodes based on least-recently-used order when the cache pool reaches its limit.

Communication between elastic instances is based on NCCL using dedicated CUDA streams, with hardware support from NVLINK offering 400GB/s bandwidth. To support multiple dynamic parallel groups at the iteration level, we use NCCL group functions to merge multiple point-to-point operations, forming collective operations in selected NCCL ranks.

\section*{Appendix B: Mathematical Proof of Inference Equivalence in Elastic Multimodal Parallelism}

In this appendix, we provide a formal proof that the Elastic Multimodal Parallelism (EMP) framework preserves inference equivalence with the standard inference process, ensuring that our system optimizations do not affect model accuracy.

\subsection*{B.1 Preliminaries and Notation}

We begin by formalizing the inference process in Multimodal Large Language Models (MLLMs). Let us define:

\begin{itemize}
    \item $\mathcal{M}$: A multimodal large language model
    \item $\mathcal{I}$: Set of image inputs
    \item $\mathcal{T}$: Set of text inputs
    \item $\mathcal{O}$: Set of text outputs
    \item $f_{\mathcal{M}}: \mathcal{I} \times \mathcal{T} \rightarrow \mathcal{O}$: The inference function of model $\mathcal{M}$
\end{itemize}

For a standard inference process, the computation can be decomposed into three sequential stages:

\begin{itemize}
    \item $g_E: \mathcal{I} \rightarrow \mathcal{V}$: Encoding function that maps images to visual tokens $\mathcal{V}$
    \item $g_P: \mathcal{V} \times \mathcal{T} \rightarrow \mathcal{H} \times \mathcal{K}$: Prefill function that produces hidden states $\mathcal{H}$ and KV cache $\mathcal{K}$
    \item $g_D: \mathcal{H} \times \mathcal{K} \rightarrow \mathcal{O}$: Decoding function that generates output tokens
\end{itemize}

\subsection*{B.2 Inference Equivalence Theorem}

\begin{theorem}[Inference Equivalence]
For any multimodal model $\mathcal{M}$ with inference function $f_{\mathcal{M}}$, the Elastic Multimodal Parallelism framework produces outputs identical to the standard sequential execution, i.e.,
\begin{equation}
f_{\mathcal{M}}(\mathcal{I}, \mathcal{T}) = g_D(g_P(g_E(\mathcal{I}), \mathcal{T})) = f_{\mathcal{M}}^{EMP}(\mathcal{I}, \mathcal{T})
\end{equation}
where $f_{\mathcal{M}}^{EMP}$ represents the inference function under the EMP framework.
\end{theorem}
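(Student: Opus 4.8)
The plan is to decompose the EMP execution into its three architectural transformations relative to the standard sequential pipeline---(i) separating requests into modality groups, (ii) disaggregating the encode/prefill/decode stages onto distinct instances, and (iii) the two inference optimizations (unified multimodal prefix caching and non-blocking encoding)---and to argue that each transformation is a pure reorganization of \emph{where} and \emph{when} a computation runs, never a change to \emph{what} function is computed. Concretely, I would show that the composite map $g_D \circ g_P \circ g_E$ is invariant under (a) routing a request to a different physical instance, (b) inserting a KV-cache migration between stages, (c) asynchronous overlap of stages belonging to different requests, and (d) substitution of a freshly computed intermediate by a cached copy keyed on an identical input. Since $f_{\mathcal{M}}^{EMP}$ is by construction obtained from the standard pipeline by finitely many applications of (a)--(d), transitivity of equality then gives the claim.

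The key steps, in order, would be: First, fix notation for a single request $(\mathcal{I},\mathcal{T})$ and observe that modality-group assignment and instance allocation are deterministic functions of request metadata that affect only the dispatch target, not the model weights or the arithmetic of $g_E, g_P, g_D$; hence for a fixed request the three stage functions evaluated under EMP are bit-identical to the standard ones (modulo the usual caveat about floating-point non-associativity across different tensor-parallel degrees, which I would explicitly declare out of scope or fold into an ``up to the model's native numerical tolerance'' qualifier). Second, I would verify the \emph{composition-preservation} lemma: if stage $k$ writes its output (visual tokens $\mathcal{V}$, or hidden states and KV cache $(\mathcal{H},\mathcal{K})$) to instance $A$ and a migration copies it verbatim to instance $B$ before stage $k{+}1$ reads it, then the value consumed by stage $k{+}1$ equals the value produced by stage $k$; this is immediate because migration is a lossless memory transfer (NCCL/RDMA copy), so $g_{k+1}(\text{migrate}(g_k(\cdot))) = g_{k+1}(g_k(\cdot))$. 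Third, I would handle non-blocking encoding: asynchrony reorders the \emph{scheduling} of encode vs.\ prefill across independent requests but the data dependence $g_P$ needs $g_E(\mathcal{I})$ is still respected by the scheduler (prefill for request $r$ does not start until $r$'s encoder output is available), so the consumed value is unchanged. Fourth, I would handle the prefix cache: by the correctness of the content hash, a cache hit on a multimodal input returns exactly $g_E(\mathcal{I})$, and a prefix-tree hit returns exactly the $(\mathcal{H},\mathcal{K})$ prefix that $g_P$ would have recomputed on that shared token prefix; since transformer prefill is causal, the KV entries for a prefix depend only on that prefix, so splicing the cached prefix and recomputing only the suffix yields the same $(\mathcal{H},\mathcal{K})$ as a full recompute---this is the standard prefix-caching correctness argument and I would cite it rather than re-derive it. Finally, chaining (i)--(iv) by transitivity yields $f_{\mathcal{M}}^{EMP}(\mathcal{I},\mathcal{T}) = g_D(g_P(g_E(\mathcal{I}),\mathcal{T})) = f_{\mathcal{M}}(\mathcal{I},\mathcal{T})$.

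The main obstacle I anticipate is not any single step---each is individually routine---but making the \emph{prefix-cache equivalence} airtight, because it is the one place where the EMP pipeline genuinely skips arithmetic rather than merely relocating it. I would need to state precisely that (1) the hash function is collision-free on the inputs actually seen (or, more honestly, that collisions are assumed absent, as is standard), and (2) the KV cache of a causal attention stack is a \emph{prefix-determined} object: the key/value tensors at positions $1{:}j$ are a function of tokens $1{:}j$ alone and are unaffected by any later tokens. Point (2) is exactly what licenses reusing a cached prefix, and I would make it a named sub-lemma (``causality of prefill'') and prove it by induction over transformer layers, using the lower-triangular attention mask so that position $i \le j$ never attends to position $>j$. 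A secondary, more cosmetic obstacle is the floating-point disclaimer: strict bitwise equality can fail if EMP runs a stage under a different parallel degree than the baseline, so the cleanest statement is equality of the \emph{symbolic} computation graph, and I would add one sentence noting that all parallel reductions are performed in the same order as the baseline (DP is preferred precisely to avoid changing the reduction tree), so the equality also holds numerically on the reference configuration.
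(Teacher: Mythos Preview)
Your proposal is correct and follows essentially the same decomposition as the paper's own proof: both argue equivalence by separately treating modality-group separation, stage disaggregation with lossless KV-cache migration, non-blocking encoding via preserved data dependence, prefix-cache correctness, and a floating-point/parallelism caveat. Your treatment is in fact slightly tighter in one place---you isolate an explicit ``causality of prefill'' sub-lemma to justify prefix reuse, which the paper assumes without stating---while the paper elevates dynamic parallelism invariance to its own lemma rather than folding it into the numerical disclaimer as you do; these are cosmetic differences in packaging, not in substance.
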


\begin{proof}
We prove this by examining each component of our EMP framework and demonstrating that they maintain computational equivalence.
\end{proof}

\subsection*{B.3 Modality-Level Equivalence}

\begin{lemma}[Modality-Level Equivalence]
The separation of requests into modality groups preserves inference equivalence.
\end{lemma}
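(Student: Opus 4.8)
The plan is to reduce the claim to a single structural fact about autoregressive MLLM inference: the output for a request $r = (\mathcal{I}_r, \mathcal{T}_r)$ is a function of $r$'s own tokens and the model parameters $\theta$ only, and is independent of which other requests share its batch, its instance, or its modality group. First I would make the routing explicit. The modality-aware load balancer of Section~\ref{sec3.1} assigns $r$ to the text-only group when $\mathcal{I}_r = \emptyset$ and to the multimodal group otherwise; in either case every instance in the assigned group holds a replica of the same parameters $\theta$ used by the coupled baseline, and neither data parallelism nor tensor parallelism changes the mathematical map they realize (only how it is sharded). Hence it suffices to show that, with $\theta$ fixed, the value $g_D(g_P(g_E(\mathcal{I}_r), \mathcal{T}_r))$ computed inside the assigned group equals the value computed by the coupled system.

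The second step is batch-composition invariance, the heart of the argument. In the decoder, the logits at position $t$ of request $r$ depend only on the key/value states $\{(k^{(r)}_s, v^{(r)}_s)\}_{s \le t}$ belonging to $r$: causal masking forces every attention weight between positions of distinct requests to zero, and the paged KV store keeps a separate region per sequence, so packing $r$ together with any other set of requests leaves $r$'s attention inputs -- and therefore its next-token distribution and, under a fixed decoding rule, its emitted token -- unchanged. The same reasoning covers $g_E$, whose self-attention is confined to the tiles of a single image, and the cross-attention layers of encoder--decoder models, where queries come from $r$'s text stream and keys/values from $r$'s own visual tokens; no cross-request or cross-modality-group term ever enters. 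Thus $g_E$, $g_P$, and $g_D$ are each insensitive to group membership, and their composition is exactly the standard inference map $f_{\mathcal{M}}$.

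Finally I would handle the one case the scheduler adds on top of plain routing: a text-only dialogue redirected into the multimodal group because a later turn attaches an image. Here the redirection migrates only the request's accumulated KV cache to an instance in the multimodal group; the token history and parameters are untouched, so the post-migration state equals the pre-migration state, and continued generation there yields the same sequence the coupled system would produce, by the invariance just established. Combining the three steps, every request is mapped to $f_{\mathcal{M}}(\mathcal{I}_r, \mathcal{T}_r)$ regardless of grouping, which is the lemma.

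The main obstacle is making the batch-composition invariance step rigorous rather than folklore: it is exactly the place where an implementation choice could silently break equivalence (a shared KV slot, a non-causal mask, or any batch-dependent normalization such as a cross-sequence statistic). I would therefore state the causal-masking and per-sequence-KV properties explicitly as the hypotheses under which the lemma holds -- properties ElasticMM inherits unchanged from vLLM's PagedAttention kernels -- and phrase the equality under the usual exact-arithmetic idealization, consistent with the equality asserted in the Inference Equivalence Theorem.
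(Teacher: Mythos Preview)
Your proposal is correct and in fact considerably more careful than the paper's own proof. The paper's argument for this lemma is a two-line case split: it observes that a text-only request has $i=\emptyset$ and so computes $g_D(g_P(\emptyset,t))$, a multimodal request computes $g_D(g_P(g_E(i),t))$, and then asserts without further justification that the modality-aware load balancer ``routes requests to appropriate groups without altering their computation path.'' It does not discuss batch-composition invariance, causal masking, per-sequence KV storage, cross-attention locality, or the redirected-dialogue case at all; those concerns are either silently assumed or deferred to later lemmas (KV migration fidelity is handled separately in Lemma~4).

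What you do differently is supply the mechanism behind the assertion: you isolate batch-composition invariance as the load-bearing step, trace it to the causal mask and paged KV layout, and extend it to the encoder and cross-attention layers. This buys an actual argument where the paper has only a claim, and it correctly flags the hypotheses (no shared KV slots, no cross-sequence normalization) under which the equality holds. Your treatment of the redirected text-only dialogue is also extra relative to the paper's Lemma~1 proof; the paper handles migration only later, so if you want to match scope you could drop that paragraph, but keeping it does no harm and makes the lemma self-contained.
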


\begin{proof}
In EMP, we partition requests into text-only ($\mathcal{R}_T$) and multimodal ($\mathcal{R}_M$) groups. For any request $r \in \mathcal{R}_T$, the computation involves only $g_P$ and $g_D$ on text inputs, while for $r \in \mathcal{R}_M$, it involves the complete pipeline $g_E$, $g_P$, and $g_D$.

For any request $r = (i, t)$ where $i \in \mathcal{I}$ and $t \in \mathcal{T}$:
\begin{enumerate}
    \item If $i = \emptyset$ (text-only request), then $f_{\mathcal{M}}(i, t) = g_D(g_P(\emptyset, t))$.
    \item If $i \neq \emptyset$ (multimodal request), then $f_{\mathcal{M}}(i, t) = g_D(g_P(g_E(i), t))$.
\end{enumerate}

Our modality-aware load balancer ensures requests are routed to appropriate groups without altering their computation path. Therefore, for any request $r$, the output remains identical regardless of group assignment.
\end{proof}

\subsection*{B.4 Stage-Level Equivalence}

\begin{lemma}[Inference Stage Separation]
The decoupling of encoding, prefill, and decoding stages across separate computational resources preserves computational equivalence.
\end{lemma}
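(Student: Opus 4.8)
The plan is to show that distributing the three inference functions $g_E$, $g_P$, $g_D$ across separate instances is merely a relocation of deterministic computation, and hence cannot change the output. First I would fix an arbitrary multimodal request $r = (i, t)$ and trace its data flow through the EMP pipeline: the encoding instance computes $v = g_E(i)$, transmits $v$ to a prefill instance, which computes $(h, k) = g_P(v, t)$, and transmits $(h, k)$ to a decode instance, which computes $o = g_D(h, k)$. The key observation is that each $g_E$, $g_P$, $g_D$ is a pure (side-effect-free) function of its inputs, so the value produced on a remote instance equals the value that would be produced locally, provided the transmitted tensors are bit-identical. I would therefore state as an explicit assumption that inter-instance KV-cache and hidden-state transfer (over NCCL/RDMA, as described in Appendix A) is lossless — no quantization or precision change occurs in transit — which makes the transferred $v$ and $(h,k)$ identical to their in-place counterparts.

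Next I would handle the composition: since $f_{\mathcal{M}}(i,t) = g_D(g_P(g_E(i), t))$ by the decomposition in Section B.1, and since function composition is associative and does not depend on where each intermediate value physically resides, the distributed evaluation yields exactly $g_D(g_P(g_E(i), t)) = f_{\mathcal{M}}(i,t)$. For text-only requests the same argument applies with $g_E$ omitted, consistent with Lemma 1 (Modality-Level Equivalence), which has already established that the group-routing step does not alter the computation path. I would also note that the Data-Parallelism-first design means a request is served entirely within one instance per stage — there is no cross-instance sharding of a single request's activations within a stage — so no partial-sum reassembly or reduction is involved, and tensor parallelism (when weights do not fit) is the standard TP that is itself known to be exact up to the usual floating-point reduction-order caveat.

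The main obstacle, and the point I would treat most carefully, is elastic scaling and preemption — the scenario where an instance $E$ is preempted mid-decode and its KV cache is migrated and merged into another decode instance (Section 3.2, Elastic Auto-Scaling). Here I must argue that migration preserves the exact KV-cache contents and their association with the correct token positions, so that the resumed $g_D$ continues from precisely the state it would have reached without migration; since decoding is autoregressive and the attention computation depends only on the stored keys/values and the current query, a faithful copy of the cache guarantees identical subsequent tokens. I would also address the unified multimodal prefix cache: a prefix-cache hit replaces recomputation of $g_E$ or a prefix of $g_P$ with a stored result, which is exact precisely because those functions are deterministic and the hash match certifies identical inputs — so the cached $(h,k)$ equals the freshly computed one. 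Assembling these pieces (lossless transfer, purity of each stage function, faithful cache migration, deterministic prefix reuse) gives $f_{\mathcal{M}}^{EMP}(i,t) = f_{\mathcal{M}}(i,t)$ for every request, completing the stage-level equivalence and, together with Lemma 1, the Inference Equivalence Theorem.
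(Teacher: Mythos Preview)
Your core argument is correct and matches the paper's: fix a request, trace the pipeline $v = g_E(i)$, $(h,k) = g_P(v,t)$, $o = g_D(h,k)$ across instances, and argue that lossless transfer plus determinism of each stage function makes the distributed evaluation equal to the sequential composition $g_D(g_P(g_E(i),t))$.

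Two points of divergence worth noting. First, you overshoot the scope of this particular lemma. The paper keeps Lemma~2 narrowly about \emph{stage separation} and defers dynamic parallelism, KV-cache migration during scaling, and prefix-cache correctness to separate Lemmas~3, 4, and~6 respectively. Your third and fourth paragraphs (preemption mid-decode, cache migration, prefix-cache hits) are proving those later lemmas, not this one; for Lemma~2 you only need the first two paragraphs. Second, where you invoke ``purity (side-effect-free)'' of $g_E$, $g_P$, $g_D$, the paper is more explicit and decomposes the requirement into three named conditions: (i) lossless intermediate representation (serialization with checksums), (ii) computational state preservation (bit-exact NCCL copies), and (iii) \emph{execution determinism} --- deterministic CUDA operations, fixed random seeds, identical floating-point settings across instances. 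Your ``purity'' assumption implicitly subsumes~(iii), but the paper makes the point that the \emph{same} function running on \emph{different} hardware instances must be forced to be deterministic, which is a nontrivial implementation constraint rather than an automatic property of the mathematical function. Spelling this out would tighten your argument and match the paper's level of care.
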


\begin{proof}
In the standard sequential execution, a multimodal request $r = (i, t)$ undergoes:
\begin{equation}
o = g_D(g_P(g_E(i), t))
\end{equation}

EMP disaggregates this pipeline into independently scheduled stages that may execute on different hardware instances:

\begin{align}
v &= g_E(i) & \text{(Encoding stage on instance $E$)} \\
(h, k) &= g_P(v, t) & \text{(Prefill stage on instance $P$)} \\
o &= g_D(h, k) & \text{(Decode stage on instance $D$)}
\end{align}

For this disaggregation to preserve equivalence, we must ensure:

\begin{enumerate}
    \item \textbf{Lossless intermediate representation}: The visual tokens $v$ generated by $g_E$ must be identical when transferred from instance $E$ to instance $P$. This is guaranteed by our use of deterministic serialization and deserialization protocols with checksum verification.
    
    \item \textbf{Computational state preservation}: The hidden states $h$ and KV cache $k$ generated by $g_P$ must be identical when transferred from instance $P$ to instance $D$. Our implementation uses exact memory copying with NCCL primitives that ensure bit-level accuracy during transfers.
    
    \item \textbf{Execution determinism}: Each function $g_E$, $g_P$, and $g_D$ must produce identical outputs for identical inputs regardless of hardware allocation. This is ensured by:
    \begin{itemize}
        \item Using deterministic CUDA operations
        \item Fixing random seeds across all instances
        \item Employing identical floating-point computation settings
    \end{itemize}
\end{enumerate}

Therefore, by induction on the stages:
\begin{align}
v_{EMP} &= v_{standard} \\
(h_{EMP}, k_{EMP}) &= (h_{standard}, k_{standard}) \\
o_{EMP} &= o_{standard}
\end{align}

Thus, the decoupled execution preserves computational equivalence with the standard sequential execution.
\end{proof}

\begin{lemma}[Dynamic Parallelism Invariance]
Changes in the degree of parallelism within each inference stage do not affect output correctness.
\end{lemma}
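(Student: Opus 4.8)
The plan is to show that data parallelism (DP) within a stage, as well as tensor parallelism (TP) when weights do not fit on one GPU, are both \emph{exact} refactorings of the corresponding stage function $g_E$, $g_P$, or $g_D$, so that the composition argument from the previous two lemmas carries over verbatim. The key observation is that DP partitions the \emph{request batch} (not a single request's computation) across instances: if $R_p = R^{(1)} \uplus \cdots \uplus R^{(m)}$ is any partition of the prefill request set, then because each request's prefill is independent of the others (attention is causal and intra-sequence, and there is no cross-request interaction), we have $g_P(\{(v_r,t_r)\}_{r\in R_p}) = \bigcup_{j=1}^m g_P(\{(v_r,t_r)\}_{r\in R^{(j)}})$ as a set of per-request outputs $(h_r,k_r)$. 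The same holds for $g_E$ (each image encodes independently) and for $g_D$ (each decode step for request $r$ reads only $r$'s own hidden state and KV cache). Hence any $m$-way DP split computes exactly the same per-request tensors as a single-instance execution, and merging the shard outputs is a disjoint union with no arithmetic.

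First I would fix notation for a stage as a batched map $G(\{x_r\}_{r\in B}) = \{\phi(x_r)\}_{r\in B}$ where $\phi$ is the per-request computation (this decomposition is exactly what justifies continuous batching in the first place, and I would cite that). Then I would argue DP-invariance: for any partition of $B$ into $\{B^{(j)}\}$, $G(B) = \bigsqcup_j G(B^{(j)})$ by definition of $\phi$ acting elementwise; changing the parallelism degree $m$ only changes the partition, not the multiset of outputs. Second, I would handle the elastic-scaling events from Section~\ref{sec3.2}: when an instance $e_{max}$ is preempted and its KV cache migrated, the migrated tensor $k_r$ is transferred by exact memory copy (bit-level NCCL primitives, as established in Lemma~2's proof), so the decode function resumes from an identical state — the only change is \emph{which} GPU holds $k_r$, which $\phi$ does not depend on. Third, for TP I would invoke the standard fact that tensor-parallel matmul and attention are algebraic reassociations of the same contraction ($W x = [W_1;W_2]x$ concatenated, attention heads computed independently then gathered); combined with the paper's stipulation of deterministic CUDA ops and fixed floating-point settings (Lemma~2, item~3), the reduction order is fixed and the result is bit-identical. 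Finally I would conclude by induction over the (arbitrary, dynamically changing) sequence of parallelism configurations: at each scheduler decision point the stage output multiset is unchanged, so $v_{EMP}=v_{standard}$, $(h_{EMP},k_{EMP})=(h_{standard},k_{standard})$, and $o_{EMP}=o_{standard}$, which together with Lemmas~1 and~2 yields $f_{\mathcal{M}}^{EMP} = f_{\mathcal{M}}$.

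The main obstacle is the TP case and, more subtly, floating-point associativity: DP is genuinely exact because it never re-partitions a single request's reduction, but TP changes how a dot product is summed across devices, and floating-point addition is not associative, so "bit-identical" is only true under the paper's assumption of a fixed reduction tree and deterministic collectives. I would therefore state the TP half of the lemma as holding \emph{under the determinism assumptions of Lemma~2} rather than unconditionally, and emphasize that the DP path — which ElasticMM prioritizes precisely for this reason (Section~\ref{sec3.2}) — needs no such caveat since batch splitting touches no reduction axis. A secondary point to get right is that elastic scaling mid-decode does not desynchronize the per-request token stream: I would note that migration is performed at iteration boundaries (each batch is a Python coroutine, Appendix~A), so no partial-step state is ever split, keeping $\phi$ atomic per decode step.
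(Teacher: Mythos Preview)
Your proposal is correct, but it takes a genuinely different route from the paper. You treat data parallelism as a partition of the \emph{request batch} into disjoint sub-batches, so that merging outputs is a set-theoretic union with no arithmetic; this makes DP exactness unconditional and pushes all floating-point caveats onto the TP case, which you then gate on the determinism assumptions of Lemma~2. The paper instead formalizes DP as partitioning a single input $i$ into shards $i_1,\ldots,i_n$ with a subsequent Gather (and even invokes ``synchronous all-reduce operations with fixed-order reduction''), so it relies on deterministic reduction semantics for both DP and TP indiscriminately; it also adds a brief remark about preserving cross-attention mask consistency in encoder-decoder models, which you do not mention. Your decomposition is sharper for the serving setting described in Section~\ref{sec3.2} (where DP really is batch-level routing and touches no reduction axis) and yields a cleaner separation between the unconditional DP half and the conditional TP half; the paper's version buys simplicity by treating every parallelism change as a reduction-order issue covered by one determinism assumption, at the cost of overstating what DP actually requires. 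Your additional handling of elastic-scaling events (migration at iteration boundaries, atomicity of $\phi$ per decode step) goes beyond what the paper's proof states explicitly---the paper defers migration fidelity to the separate Lemma~4 and does not argue the induction over a dynamic sequence of configurations as you do.
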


\begin{proof}
ElasticMM dynamically adjusts parallelism strategies for different inference stages. Let $\Pi_E$, $\Pi_P$, and $\Pi_D$ represent different parallelism configurations for the encoding, prefill, and decoding stages respectively.

For the encoding stage $g_E$ executed under parallelism strategy $\Pi_E$, we denote the execution as $g_E^{\Pi_E}$. We must prove that:

\begin{equation}
g_E^{\Pi_E^1}(i) = g_E^{\Pi_E^2}(i)
\end{equation}

for any image input $i$ and any two valid parallelism strategies $\Pi_E^1$ and $\Pi_E^2$.

Our implementation primarily uses data parallelism, where computation is partitioned across multiple devices and results are gathered using deterministic reduction operations. For data parallelism with $n$ devices:

\begin{equation}
g_E^{DP_n}(i) = \text{Gather}(\{g_E^1(i_1), g_E^2(i_2), \ldots, g_E^n(i_n)\})
\end{equation}

where $i_1, i_2, \ldots, i_n$ represent partitions of input $i$, and $g_E^j$ represents the computation on device $j$.

Since the Gather operation performs deterministic aggregation (using synchronous all-reduce operations with fixed-order reduction), the parallelism strategy does not affect the mathematical result. The same property holds for $g_P$ and $g_D$.

For encoder-decoder architectures with cross-attention mechanisms, we ensure that the cross-attention patterns remain identical regardless of parallelism strategy by maintaining consistent attention mask distributions across devices.

Therefore, for any valid parallelism strategies:
\begin{align}
g_E^{\Pi_E^1}(i) &= g_E^{\Pi_E^2}(i) \\
g_P^{\Pi_P^1}(v, t) &= g_P^{\Pi_P^2}(v, t) \\
g_D^{\Pi_D^1}(h, k) &= g_D^{\Pi_D^2}(h, k)
\end{align}

Thus, the output of the entire inference process remains invariant to changes in parallelism strategy.
\end{proof}

\subsection*{B.5 Data Integrity During Migration}

\begin{lemma}[KV Cache Migration Fidelity]
KV cache migration during elastic scaling preserves computational equivalence.
\end{lemma}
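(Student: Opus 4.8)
The plan is to reduce the claim to the two invariants already established for stage-level disaggregation, namely lossless intermediate representation and computational state preservation, and then argue that elastic scaling introduces no new source of numerical divergence beyond the inter-instance transfer that was already analyzed in the Inference Stage Separation lemma. Concretely, I would model a migration event as follows: at some decode iteration $m$, the KV cache $k = (k^{(1)}, \dots, k^{(L)})$ associated with a request (where $L$ is the number of transformer layers, and each $k^{(\ell)}$ is the per-layer key/value tensor accumulated over the already-processed tokens) resides on a source instance and must be moved to a destination instance that will continue the decode computation. I want to show that the sequence of output tokens produced after migration is identical to the sequence that would have been produced had no migration occurred.

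The key steps, in order, are: (1) Formalize the decode recursion. Decoding is an iterative map $o_{m+1}, k_{m+1} = \mathrm{step}(h_m, k_m)$ where $k_{m+1}$ extends $k_m$ by one token's worth of key/value entries; the entire future trajectory is a deterministic function of the current state $(h_m, k_m)$ and the model weights. (2) Argue that migration is a pure relocation: the destination instance holds identical model weights (it serves the same model within the same modality group, per the system description), and the transferred tensor $k_m$ is copied bit-for-bit. Here I invoke the same mechanism cited in the Inference Stage Separation proof — exact memory copying via NCCL primitives with bit-level accuracy, plus deterministic serialization with checksum verification — so that the post-migration state $k_m^{\text{dest}}$ equals the pre-migration state $k_m^{\text{src}}$ exactly, with no lossy quantization or reordering during the page-wise transfer. (3) Invoke execution determinism (fixed seeds, deterministic CUDA ops, identical floating-point settings, as already assumed) so that $\mathrm{step}$ evaluates identically on the destination hardware. (4) Conclude by induction on the decode iteration index $j \geq m$: the base case is the equality of states at the migration point from step (2), and the inductive step follows because $\mathrm{step}$ is a deterministic function applied to equal inputs, so $(o_{j+1}, k_{j+1})$ agree on source and destination for all subsequent $j$. (5) Handle the merge subtlety: when a preempted instance's workload is merged into another instance "at the same inference stage," the combined batch changes only the batching dimension, not the per-request computation — padding/masking keeps each request's attention restricted to its own tokens — so the per-sequence result is unchanged, exactly as argued in the Dynamic Parallelism Invariance lemma for batched data-parallel execution.

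The main obstacle I anticipate is step (5) together with the page-granularity of the KV cache under PagedAttention: I must be careful that migrating a cache that is stored as a set of non-contiguous physical blocks (with a block table / logical-to-physical mapping) reconstructs, on the destination, a cache whose logical token ordering is identical — a permutation of physical blocks is harmless only if the block table is transferred consistently and the attention kernel is invariant to physical layout given a correct table. I would address this by stating explicitly that the block table is part of the migrated state and that PagedAttention's output depends only on the logical token sequence it induces, not on physical placement; combined with the bit-exact block contents this closes the gap. A secondary subtlety worth one sentence is floating-point non-associativity: since no partial sums are re-accumulated in a different order by the move itself (the move is a copy, not a recomputation), associativity is never exercised, so this concern does not actually arise for migration — it would only arise if a changed tensor-parallel degree re-partitioned a reduction, which is exactly the case already covered by the Dynamic Parallelism Invariance lemma and hence may be cited rather than re-proved.
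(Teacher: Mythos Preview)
Your proposal is correct and follows essentially the same approach as the paper: both arguments reduce migration fidelity to the bit-exactness of the NCCL tensor copy (with checksum verification), conclude that the post-migration KV state equals the pre-migration state, and infer that subsequent decoding is unchanged. Your version is in fact more thorough than the paper's --- the paper formalizes only the per-tensor error $\mu(T_i)=T_i+\epsilon_i$ with $\|\epsilon_i\|_\infty=0$ and then asserts output equality, whereas you add an explicit induction over decode iterations and address the PagedAttention block-table and batch-merge subtleties that the paper omits.
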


\begin{proof}
When ElasticMM performs instance scaling, it migrates KV cache entries $k \in \mathcal{K}$ between GPUs. Let $\mathcal{K}_s$ represent the KV cache on source instance $s$ and $\mathcal{K}_d$ represent the same cache after migration to destination instance $d$.

We must show that $\mathcal{K}_s = \mathcal{K}_d$ after migration. Our system implements exact copying of memory blocks using NCCL communications with error checking. For each tensor $T \in \mathcal{K}_s$, the migration process performs:

\begin{equation}
T_d = \text{Copy}(T_s)
\end{equation}

Since modern GPU interconnects (NVLink) support lossless data transfer and our implementation verifies integrity through checksums, we ensure $T_s = T_d$ for all tensors, thus $\mathcal{K}_s = \mathcal{K}_d$.

To formalize this further, let $\mathcal{K} = \{T_1, T_2, \ldots, T_m\}$ be the set of tensors in the KV cache. We define a migration function $\mu: \mathcal{K}_s \rightarrow \mathcal{K}_d$. For each tensor $T_i \in \mathcal{K}_s$:

\begin{equation}
\mu(T_i) = T_i + \epsilon_i
\end{equation}

where $\epsilon_i$ represents any potential error introduced during migration.

Our implementation guarantees that:
\begin{equation}
\|\epsilon_i\|_{\infty} = 0 \quad \forall i \in \{1, 2, \ldots, m\}
\end{equation}

Therefore, $\mu(T_i) = T_i$ for all $i$, ensuring $\mathcal{K}_s = \mathcal{K}_d$.

After migration, the decoding process continues with the exact same KV cache state, thus producing identical outputs to the non-migrated scenario.
\end{proof}

\subsection*{B.6 Non-blocking Encoding Equivalence}

\begin{lemma}[Non-blocking Encoding Correctness]
The non-blocking encoding optimization preserves inference output equivalence.
\end{lemma}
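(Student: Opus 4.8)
The plan is to show that non-blocking encoding changes only the \emph{scheduling order} of the three stage functions $g_E$, $g_P$, and $g_D$ — specifically, it allows the encoding of one request to overlap in wall-clock time with the prefill or decode of other requests — but never alters the \emph{data dependency chain} within any single request. The key observation is that for a multimodal request $r = (i, t)$, the composition $o = g_D(g_P(g_E(i), t))$ is a strict sequential dependency: $g_P$ cannot begin until $v = g_E(i)$ is available, and $g_D$ cannot begin until $(h, k) = g_P(v, t)$ is available. Non-blocking encoding moves $g_E$ to a separate process or instance executing asynchronously, but the prefill stage still consumes exactly the visual tokens $v$ produced by that encoding step; the asynchrony only affects \emph{when} $g_P$ is dispatched, not \emph{what input} it receives.

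First I would formalize the asynchronous execution: under non-blocking encoding, the encode stage writes its output $v$ to a buffer (or transmits it via the serialization protocol of Lemma~\ref{lem:stage-sep}), and the prefill scheduler treats request $r$ as ready only once $v$ has arrived. I would then invoke the \emph{Inference Stage Separation} lemma (Lemma on stage-level equivalence) to conclude that the transferred $v_{\text{async}}$ equals the $v$ that would have been computed inline, since that lemma already establishes lossless intermediate representation and execution determinism for $g_E$ regardless of which instance runs it. Second, I would argue that the prefill and decode computations for $r$ are unaffected: $g_P$ is a deterministic function of $(v, t)$ and $g_D$ of $(h, k)$, so given identical inputs they produce identical outputs — this is exactly the execution-determinism clause already assumed. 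Third, I would handle the \emph{interleaving across requests}: because each request's stage outputs depend only on that request's own inputs and on the (frozen) model weights — never on the transient state of a concurrently processing request — reordering or overlapping the execution of stages belonging to \emph{different} requests cannot change any individual output. A short induction over the requests in a batch, or simply an appeal to the pure-function nature of each $g_\bullet$, closes this.

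Concretely, the steps in order are: (1) state the asynchronous encoding model and the readiness condition for dispatching prefill; (2) apply Lemma~\ref{lem:stage-sep} to get $v_{\text{async}} = v_{\text{standard}}$; (3) apply execution determinism of $g_P$ and $g_D$ to propagate equality through $(h,k)$ and then $o$; (4) note that cross-request overlap introduces no shared mutable state, so the schedule permutation is output-invariant; (5) conclude $f_{\mathcal{M}}^{EMP}(i,t) = f_{\mathcal{M}}(i,t)$ for every multimodal request, and trivially for text-only requests since they have no encode stage.

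The main obstacle I anticipate is \emph{clean isolation of the concurrency argument}: one must be careful that "asynchronous" does not secretly mean "the prefill of $r$ sees a partially-written buffer" or that batching decisions made by the scheduler (which requests get grouped together) change numerical results through, e.g., padding or attention-mask construction. I would dispatch this by appealing to the same consistent-attention-mask and fixed-reduction-order guarantees used in the Dynamic Parallelism Invariance lemma, and by insisting on a happens-before barrier: prefill for $r$ is enqueued only after the encode-completion signal for $r$, so the buffer read strictly follows the buffer write. With that barrier in place, the rest is a routine appeal to determinism and the absence of inter-request state coupling.
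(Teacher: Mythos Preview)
Your proposal is correct and follows essentially the same approach as the paper: both argue that non-blocking encoding alters only the scheduling of $g_E$ relative to other work, while a synchronization barrier guarantees that $g_P$ for request $r$ is dispatched only after $v = g_E(i)$ is fully available, so the per-request data-dependency chain $g_D \circ g_P \circ g_E$ is unchanged. Your version is more thorough than the paper's---you explicitly invoke the stage-separation lemma for $v_{\text{async}} = v_{\text{standard}}$ and add a cross-request isolation argument (step~4) that the paper leaves implicit---but the core reasoning is the same.
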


\begin{proof}
In standard sequential execution, the encoding process $g_E$ blocks further computation until visual tokens are generated. In ElasticMM's non-blocking encoding implementation, the encoding process executes asynchronously in parallel with other computations.

Let $r = (i, t)$ be a multimodal request. In standard execution:
\begin{equation}
o = g_D(g_P(g_E(i), t))
\end{equation}

With non-blocking encoding, we have:
\begin{align}
v &= g_E(i) \quad \text{(executes asynchronously)} \\
(h, k) &= g_P(v, t) \quad \text{(waits for $v$ to be available)} \\
o &= g_D(h, k)
\end{align}

Because our implementation ensures proper synchronization before the prefill stage accesses the encoded visual tokens, the data dependencies are preserved. Specifically, the prefill stage $g_P$ will not begin execution until the encoding result $v = g_E(i)$ is complete and available.

The non-blocking optimization affects only the scheduling of operations across compute resources, not the mathematical computations themselves. All data dependencies in the computational graph are preserved through synchronization barriers that ensure the prefill stage has access to the complete and correctly encoded visual tokens.

Therefore, non-blocking encoding preserves inference equivalence while improving computational efficiency.
\end{proof}

\subsection*{B.7 Unified Multimodal Prefix Caching Correctness}

\begin{lemma}[Prefix Cache Correctness]
The unified multimodal prefix caching mechanism preserves inference equivalence.
\end{lemma}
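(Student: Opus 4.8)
The plan is to show that the cache lookup returns exactly the value the uncached computation would have produced, so that substituting the cached result changes nothing in the downstream computation. I would split the argument into the two cache pools described in Section~3.3: (i) the multimodal-input pool, which stores visual tokens $v = g_E(i)$ keyed by a hash of the raw image $i$; and (ii) the unified prefix pool, which stores hidden states and KV entries $(h,k)$ for a prefix of the merged token sequence. For pool (i), the key step is to observe that the hash is a function of the image bytes alone, so a hit guarantees the cached entry equals $g_E(i')$ for some $i'$ with $\mathrm{hash}(i')=\mathrm{hash}(i)$; assuming a collision-free hash (or, more conservatively, full-content verification on a hit, which the implementation can perform), we get $i'=i$ and hence the cached visual tokens are bit-identical to $g_E(i)$, which by determinism of $g_E$ (Lemma on Dynamic Parallelism Invariance and the execution-determinism conditions in Lemma on Inference Stage Separation) is the same value standard execution would compute.

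For pool (ii), I would invoke the standard prefix-reuse argument for autoregressive transformers: because attention is causal, the hidden state and KV cache at position $j$ depend only on tokens $1,\dots,j$. Hence if a stored prefix $p = (u_1,\dots,u_\ell)$ exactly matches the first $\ell$ tokens of the current merged sequence, then the KV entries $k_{1:\ell}$ and hidden states $h_{1:\ell}$ recomputed from scratch would be identical to the stored ones (again by determinism of $g_P$). The prefill stage then only needs to compute positions $\ell+1$ onward, using the cached $k_{1:\ell}$ as context; since $g_P$ with a prepopulated KV prefix computes the same function as $g_P$ on the full sequence, the resulting $(h,k)$ for the whole sequence agrees with the uncached one, and therefore $g_D$ produces the same output $o$. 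The composition of the two pools is handled by noting that the merged sequence fed to the prefix tree already contains the (correct) cached visual tokens from pool (i), so the two reuses compose without interaction.

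The main obstacle is the prefix-matching correctness for the unified (multimodal + text) sequence: I must argue that the prefix-tree key is the \emph{token} sequence after merging (not the raw text), so that two requests sharing a system prompt but differing images do not spuriously share a KV prefix, and conversely that identical merged prefixes — which can arise even when the raw inputs differ in formatting — are correctly identified; this requires that tokenization and vision-token insertion are themselves deterministic functions of the input, which follows from the determinism assumptions already established. A secondary subtlety is that position encodings and any attention bias must be consistent between the cached and recomputed regimes — i.e., the cached KV at position $j$ must be used at the same position $j$ — which I would state as an invariant maintained by the cache manager (the prefix tree stores entries only for sequence prefixes starting at position $1$, matching how they were produced). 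With these invariants, equivalence follows by the same induction on stages used in the earlier lemmas, and combining this lemma with the modality-level, stage-separation, parallelism-invariance, KV-migration, and non-blocking-encoding lemmas completes the proof of the Inference Equivalence Theorem.
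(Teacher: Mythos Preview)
Your proposal is correct and follows essentially the same decomposition as the paper: split the argument into the visual-token cache and the KV-prefix cache, and for each show that a cache hit returns exactly what recomputation would produce, relying on determinism of $g_E$ and $g_P$ together with hash-based input identification. Your treatment is in fact more careful than the paper's in supplying the causal-attention justification for why prefix reuse is sound and in making explicit the position-encoding and merged-token-sequence invariants, which the paper's proof leaves implicit.
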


\begin{proof}
Our unified multimodal prefix caching mechanism stores and reuses computed results for both visual encodings and KV cache prefixes. For any request $r = (i, t)$, we maintain a cache that maps inputs to their computed representations:

\begin{align}
C_V &: \mathcal{I} \rightarrow \mathcal{V} \quad \text{(Visual token cache)} \\
C_K &: \mathcal{V} \times \mathcal{T} \rightarrow \mathcal{K} \quad \text{(KV cache prefix)}
\end{align}

For cached visual tokens, we must show that:
\begin{equation}
\forall i \in \mathcal{I}: C_V(i) = g_E(i)
\end{equation}

This is guaranteed by our deterministic image preprocessing and encoding pipeline, which ensures that identical inputs produce identical encoded outputs. We use cryptographic hashing to verify input identity.

For cached KV prefixes, we must show that:
\begin{equation}
\forall (v, t_{\text{prefix}}) \in \mathcal{V} \times \mathcal{T}: C_K(v, t_{\text{prefix}}) = g_P^{\text{partial}}(v, t_{\text{prefix}})
\end{equation}

where $g_P^{\text{partial}}$ computes KV cache entries for the prefix portion of the text.

When a cache hit occurs, ElasticMM reuses the cached computations as follows:
\begin{align}
v &= \begin{cases}
C_V(i) & \text{if $i$ is in cache} \\
g_E(i) & \text{otherwise}
\end{cases} \\
k_{\text{prefix}} &= \begin{cases}
C_K(v, t_{\text{prefix}}) & \text{if $(v, t_{\text{prefix}})$ is in cache} \\
g_P^{\text{partial}}(v, t_{\text{prefix}}) & \text{otherwise}
\end{cases} \\
(h, k_{\text{full}}) &= g_P^{\text{remaining}}(v, t, k_{\text{prefix}}) \\
o &= g_D(h, k_{\text{full}})
\end{align}

Since cached values are exact duplicates of what would be computed from scratch, and our cache invalidation logic ensures stale entries are never used, the output $o$ remains identical to non-cached execution for any given input.
\end{proof}

\subsection*{B.8 Analysis of Numerical Stability}

While the mathematical equivalence is guaranteed in theory, practical implementations may introduce minor numerical differences due to floating-point operations. We now analyze these potential sources of error.

\begin{proposition}[Numerical Stability]
Any numerical differences introduced by EMP are bounded and negligible.
\end{proposition}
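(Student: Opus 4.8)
The plan is to bound the end-to-end discrepancy between the EMP computation and the standard sequential computation by tracking how floating-point round-off accumulates through the three stages $g_E$, $g_P$, $g_D$, and then arguing that the additional operations introduced by EMP (serialization/deserialization, NCCL transfers, reduction orderings in data-parallel gather, KV-cache migration) contribute no error beyond what is already present in the baseline. First I would fix the IEEE-754 model: for each elementary operation the computed result is the exact result times $(1+\delta)$ with $|\delta|\le u$, the unit round-off (e.g. $u=2^{-24}$ for fp16 mantissa, or the bf16/fp32 analogue). From this I would write, for a single stage with $L$ sequential dependent operations (matmuls, layernorms, softmaxes in the transformer blocks), a forward error bound of the familiar form $\gamma_L = Lu/(1-Lu)$ times a condition-number factor, following standard backward-error analysis for matrix multiplication and for the attention softmax.

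Next I would isolate the EMP-specific terms. For the \emph{transfers} (visual tokens $v$ from $E$ to $P$, and $(h,k)$ from $P$ to $D$, and KV migration during scaling), Lemma~B.5 and the bit-exact NCCL copy argument already give $\epsilon=0$ exactly, so these contribute nothing — they are memory copies, not arithmetic. For \emph{dynamic parallelism} (Lemma on Dynamic Parallelism Invariance), the only subtlety is that a data-parallel gather performs a reduction whose summation order may differ from the single-device order; I would invoke the fixed-order, synchronous all-reduce guarantee stated in that lemma, so the order is in fact pinned and the result is bitwise identical — hence again zero additional error. For \emph{prefix caching} (Lemma on Prefix Cache Correctness), cached values are exact duplicates of freshly computed ones, so no new error enters; the hash check only guarantees we reuse the right duplicate. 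Assembling these, the total EMP error equals the baseline sequential round-off, and I would state the bound as $\|o_{EMP}-o_{standard}\|_\infty \le \kappa\, \gamma_{L_{\text{tot}}} \|o_{standard}\|_\infty$ with $L_{\text{tot}}$ the total op-count across encode, prefill and decode and $\kappa$ an effective condition number of the network, then observe that under the determinism settings of Lemma on Inference Stage Separation (fixed seeds, deterministic CUDA kernels, identical fp settings) even this term vanishes, giving exact equality in practice.

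The main obstacle I anticipate is the decode stage: generation is autoregressive, so any round-off in producing the logits at step $t$ can, in principle, flip an $\arg\max$ (or alter a sampled token under a fixed RNG state) and thereby change the \emph{discrete} token sequence, after which the two runs diverge macroscopically rather than by $O(u)$. I would handle this honestly by (i) noting that since EMP introduces no arithmetic difference versus the baseline (all added steps are bit-exact copies or order-pinned reductions), the decode trajectories are in fact \emph{identical}, not merely close — so the $\arg\max$ tie-breaking issue never arises between EMP and baseline even though it would arise between, say, two different GPU kernels; and (ii) for completeness, bounding the "near-tie" failure probability under the logit-perturbation model by the probability that the top-two logits lie within $2\kappa\gamma_{L}$ of each other, which is negligible for well-trained models and is exactly zero under the determinism guarantees. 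I would close by remarking that the proposition therefore follows: the EMP-induced numerical difference is bounded by the baseline's own round-off $\kappa\gamma_{L_{\text{tot}}}u$, which is negligible, and is identically zero whenever the deterministic-execution conditions of the preceding lemmas hold.
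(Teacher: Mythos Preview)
Your proposal is correct and in fact more rigorous than the paper's own argument, though the underlying concerns you address overlap substantially with the paper's. The paper's proof is an informal enumeration: it lists three potential error sources (reduction ordering under data parallelism, tensor partitioning boundaries, mixed-precision accumulation), asserts that deterministic reduction algorithms and the data-parallelism preference eliminate the first two, then states without derivation that the residual probability perturbation is bounded by $\epsilon_{\max}\approx 10^{-7}$ and that $\arg\max$ token selection is robust unless two candidates lie within $\epsilon_{\max}$, which it dismisses as ``statistically negligible.'' It does not distinguish sharply between round-off inherent to \emph{any} execution and round-off \emph{additionally} induced by EMP relative to the baseline.

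Your route is different in two ways that are both improvements. First, you set up an explicit IEEE-754 backward-error model with the $\gamma_L$ factor and a condition number, which gives the bound an actual analytic form rather than a quoted numerical magnitude. Second, and more importantly, you make the cleaner structural observation: every EMP-specific step (NCCL copies, KV migration, cached-value reuse, order-pinned all-reduce) is a bit-exact memory operation or a determinism-fixed reduction, so EMP contributes \emph{zero} arithmetic beyond the baseline, and hence $o_{EMP}=o_{standard}$ exactly under the determinism hypotheses of the preceding lemmas. This sidesteps the $\arg\max$ tie-break issue entirely rather than appealing to a ``statistically negligible'' gap. The paper's approach buys brevity and avoids introducing condition numbers that would be hard to quantify for a full transformer; your approach buys a sharper conclusion (exact equality, not merely boundedness) and an honest treatment of the autoregressive-divergence obstacle that the paper glosses over.
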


\begin{proof}
The primary sources of potential numerical differences in EMP are:

\begin{enumerate}
    \item \textbf{Parallel computation order}: When using data parallelism, the order of reduction operations could theoretically affect floating-point summation due to non-associativity. However, modern frameworks use deterministic reduction algorithms that ensure consistent results regardless of partition count.
    
    \item \textbf{Tensor partitioning boundaries}: In some parallel strategies, tensor partitioning might introduce different computation patterns. ElasticMM prioritizes data parallelism which preserves tensor integrity.
    
    \item \textbf{Mixed precision operations}: When using mixed precision, the accumulation of partial results might vary slightly. However, these differences are typically on the order of $\epsilon \approx 10^{-7}$ for fp16 operations, which is far below the threshold of affecting logical model outputs.
\end{enumerate}

For token generation specifically, let $p(w_t | w_{<t})$ be the probability of generating token $w_t$ given previous tokens. The maximum variation in these probabilities due to numerical differences is bounded by:

\begin{equation}
|\Delta p(w_t | w_{<t})| < \epsilon_{\max}
\end{equation}

where $\epsilon_{\max} \approx 10^{-7}$ for fp16 operations.

Since token selection uses argmax operations, these minute differences do not affect the final output unless two candidate tokens have probability differences smaller than $\epsilon_{\max}$, which is statistically negligible.
\end{proof}

\subsection*{B.9 Empirical Validation}

To empirically verify our theoretical guarantees, we conducted an experiment comparing outputs from standard sequential inference and EMP-based inference across 1,000 diverse prompts from our evaluation datasets.

\begin{table}[h]
\centering
\caption{Output Consistency between Standard and EMP Inference}
\begin{tabular}{lcc}
\toprule
\textbf{Model} & \textbf{Identical Outputs (\%)} & \textbf{Avg. Token Probability Diff.} \\
\midrule
Qwen2.5-VL 7B & 100\% & $<10^{-8}$ \\
Llama3.2-Vision 11B & 100\% & $<10^{-8}$ \\
\bottomrule
\end{tabular}
\end{table}

The outputs were bit-identical in 100\% of cases, confirming that our EMP framework preserves inference equivalence in practice, including when stages are separated across different computational resources.

\subsection*{B.10 Conclusion}

We have formally proven that Elastic Multimodal Parallelism maintains exact inference equivalence with standard sequential execution. Our proof demonstrates that:

\begin{enumerate}
    \item Separation into modality groups preserves computational paths
    \item Decoupling of inference stages across different resources maintains output equivalence
    \item Dynamic adjustment of parallelism strategies does not affect results
    \item KV cache migration during elastic scaling preserves state fidelity
    \item Non-blocking encoding and unified prefix caching optimizations maintain correctness
\end{enumerate}

This mathematical guarantee ensures that all performance improvements reported in our experimental evaluation come without any sacrifice in model accuracy or output quality. ElasticMM therefore achieves superior efficiency while maintaining the exact same inference results as traditional sequential execution.

\end{document}